\documentclass{amsart}
\pdfoutput=1
\usepackage[utf8]{inputenc}
\usepackage[T1]{fontenc}
\usepackage{amsfonts}
\usepackage{amssymb}
\usepackage{mathtools}
\usepackage{charter}
\newcommand{\coloneqq}{=}
\newcommand*{\mytitle}{Sampling of stochastic operators}
\newcommand*{\myauthor}{G\"otz E. Pfander, Pavel Zheltov}
\usepackage{float, subfig}   
\usepackage[perpage,symbol*]{footmisc}  
\DefineFNsymbols*{mylamport}[math]{\dagger\ddagger\S\P\|{**}{\dagger\dagger}{\ddagger\ddagger}}
\setfnsymbol{mylamport}
\usepackage{aliascnt, verbatim}
\usepackage[section]{placeins}
\usepackage{array}
\usepackage{paralist,verbatim,lineno,scalefnt,xspace, chngcntr,extdash,booktabs,multirow,rotating,changepage,csquotes}
\usepackage[numbers]{natbib}
\usepackage[loose,nice]{units}
\usepackage[kerning=true]{microtype}
\usepackage{hyperref}
\hypersetup{  colorlinks,
  citecolor=blue,
  linkcolor=blue,
  urlcolor=blue,
pdfkeywords = {sampling, identification, stochastic, reconstruction, sounding, radar, channel sounding, spreading function, scattering function, time-varying operators},
pdfauthor = {\myauthor},
pdftitle = {\mytitle},
pdfpagelayout = OneColumn,
pdfstartview = FitH
}
\usepackage[OT2,OT1]{fontenc}
\DeclareSymbolFont{cyrletters}{OT2}{wncyr}{m}{n}
\DeclareMathSymbol{\Shah}{\mathalpha}{cyrletters}{"78}
\theoremstyle{plain}
\newtheorem{theorem}{Theorem}

\newaliascnt{lemma}{theorem}
\newtheorem{lemma}[lemma]{Lemma}
\aliascntresetthe{lemma}

\newaliascnt{proposition}{theorem}
\newtheorem{proposition}[proposition]{Proposition}
\aliascntresetthe{proposition}

\counterwithout{figure}{section}

\theoremstyle{definition}
\newaliascnt{definition}{theorem}
\newtheorem{definition}[definition]{Definition}
\aliascntresetthe{definition}

 \newtheorem*{remark}{Remark}

\newcommand*{\R}{\mathbb{R}}

\providecommand{\Z}{\mathbb{Z}}
\newcommand*{\C}{\mathbb{C}}
\newcommand*{\Expectation}{\mathbb{E}}
\newcommand*{\EXP}{\Expectation}

\renewcommand{\S}{\mathcal{S}} \newcommand*{\FT}{\mathcal{F}}

\renewcommand{\d}{\:\mathrm{d}}

\renewcommand{\phi}{\varphi}
\newcommand*{\eps}{\varepsilon} \renewcommand*{\emptyset}{\varnothing}

\DeclarePairedDelimiter\floor{\lfloor}{\rfloor}
\DeclarePairedDelimiter\abs{\lvert}{\rvert}
\DeclarePairedDelimiter\norm{\lVert}{\rVert}
\DeclarePairedDelimiter\absq{\lvert}{\rvert^2}

\DeclarePairedDelimiter\ip{\langle}{\rangle}

\DeclarePairedDelimiterX\setnew[2]{\{}{\}}{#1 \nonscript\;\delimsize|\nonscript\; #2} 
\newcommand*{\card}[1]{\#\left\{ #1 \right\}}
\newcommand*{\eval}{\big\vert} 
\newcommand*{\E}[1]{\Expectation\bigl\{#1\bigr\}}
\newcommand*{\lst}[2][,]{{#2}_1#1 {#2}_2#1 \dots #1 {#2}}

\DeclareMathOperator{\supp}{supp}

\DeclareMathOperator{\rank}{rank\,}

\newcommand*{\eqms}{\mathrel{\stackrel{\scriptstyle m.s.}{=}}}

\DeclareMathOperator*{\mslim}{l.i.m.}
\newcommand*{\conj}[1]{\overline{#1}}
\newcommand*{\epi}[1]{\:e^{2\pi i #1}}
\newcommand*{\empi}[1]{\:e^{-2\pi i #1}}

\newcommand*{\acorr}{autocorrelation\xspace}

\newcommand*{\nnd}{positive semi-definite\xspace} 
\newcommand*{\spd}{spd\xspace}

\renewcommand{\O}{\mathrm{\Omega}}
\newcommand{\Timedelay}{\mathrm{T}}
\newcommand{\Probspace}{\mathit{\Omega}}
\newcommand*{\Minf}{M^\infty} 
\newcommand*{\Mone}{M^1}
\newcommand*{\SetM}{U}
\newcommand*{\w}{\omega}
\renewcommand{\b}[1]{\boldsymbol{#1}}
\newcommand*{\x}{{\boldsymbol{x}}}
\newcommand*{\y}{\boldsymbol{y}}

\newcommand*{\steta}{{\boldsymbol{\eta}}}

\newcommand*{\OO}{[-\O/2,\O/2]}

\newcommand*{\f}{\boldsymbol{f}}

\newcommand{\tntn}{t,\nu;t'\!,\nu'}
\newcommand{\summ}{\sum_{\mathclap{m,m'\in\Z}}}

\newcommand*{\OPW}{\operatorname{OPW}}
\newcommand*{\StOPW}{\operatorname{StOPW}}
\newcommand*{\StL}{\operatorname{St}\!L}

\renewcommand{\SS}{S \times S}
\newcommand*{\HS}{\operatorname{HS}} 
\newcommand*{\ACP}{\operatorname{ACP}} 
\newcommand*{\Id}{\operatorname{Id}}

\newcommand*{\Trans}{\mathrm{T}}  \newcommand{\Modul}{\mathrm{M}}

\newcommand*{\sumL}[1]{\sum_{{#1}=0}^{L-1}}
\newcommand*{\sumZ}[1]{\sum\limits_{\mathclap{{#1}\in\Z}}}

\newcommand*{\RR}{\R^2}  \newcommand*{\rect}{\Box} \newcommand*{\boxfunc}{\widetilde{\chi}_{\scriptscriptstyle\rect}}
\newcommand*{\KN}{\sigma}
\newcommand*{\kernel}{\kappa}
\newcommand*{\RV}{\operatorname{RV}} \newcommand*{\Zak}{\mathcal{Z}}
\renewcommand{\H}{\boldsymbol{H}}
\newcommand*{\Zee}{\boldsymbol{\mathcal{Z}}}

\newcommand*{\vect}{\operatorname{vec}}

\newcommand*{\perm}{\sigma}
   \newcommand*{\Index}{\mathcal{J}} 
\newcommand*{\Sym}{S^{\Index}} 

\newcommand*{\tensoratom}[4]{\conj{{#1}_{#2}} \otimes {#3}_{#4} }

\newcommand*{\GG}{\tensoratom{G}{}{G}{}} 
\newcommand*{\GGL}[1]{\tensoratom{G}{#1}{G}{#1}}

\newcommand*{\isPoU}{is a permissible pattern}    

\def\Reta{R_{\scriptstyle \steta}}
\newcommand{\knkn}{k,n,k\crampedrlap{'},n\crampedrlap{'}\:} 
\newcommand{\kn}{k\crampedrlap{'},n\crampedrlap{'}\:} 
\newcommand{\bigknkn}{k,n,k'\!,n'} 
\newcommand{\knknj}{k_j,n_j,k\crampedrlap{'}_j,n\crampedrlap{'}_j\:} 
\newcommand*{\Rf}{R_{\f}}
\newcommand*{\MTc}[1][]{\Modul^{n#1} \Trans^{k#1} c } 

\makeatletter\renewcommand*\env@matrix[1][*\c@MaxMatrixCols c]{\hskip -\arraycolsep\let\@ifnextchar\new@ifnextchar\array{#1}}\makeatother

 \usepackage{tikz}
\usetikzlibrary{matrix, patterns,arrows,shadings,external}
\tikzexternalize[mode=convert with system call,figure list=true,prefix=tikz/,shell escape=-enable-write18,figure name=pfand]

\newcommand{\mypattern}[2] {	\pgfmathsetmacro\NN{#1 * #1};
		\draw[style=help lines] (0,0) grid (\NN,\NN);
			\foreach \m/\n in {#2}
		\filldraw[fill=mycolor, draw=black, shift={(\m-1,\n-1)}] (0,0) rectangle (1,1) node (0.5,0.5) {};  
	\draw[style=dashed, step=#1] (0.1,0.1) grid (\NN,\NN);
	\pgfmathparse{#1 - 1}
	\let\Lminusone\pgfmathresult	
	\foreach \m in {0,...,\Lminusone}
	\foreach \n in {0,...,\Lminusone}
	{	
		\pgfmathsetmacro\c{int(#1*\m+\n+1)}
		\node[below] at (\c-0.5,0) {${\eta}_{\m\n}$}; 		\node[left] at (0,\c-0.5)  { ${\eta}_{\m\n}$};  	}
		\draw[->] (-0.2,0) -- (\NN+0.3,0) node[right] {$(t,\nu)$};
	\draw[->] (0,-0.2) -- (0,\NN+0.3) node[above] {$(t',\nu') $};
}\newcommand{\mysympatternwithdots}[6] { \footnotesize
		\draw[style=help lines] (0,0) grid (#2-0.01,#2-0.01);
	\draw[style=dashed, step=#1] (0.1,0.1) grid (#2,#2);
		\foreach \m/\n in {#3}
		\filldraw[fill=mycolor, draw=black, shift={(\m-1,\n-1)}] (0,0) rectangle (1,1) node (0.5,0.5) {};  
	\foreach \m/\n in {#4}
	{	\filldraw[fill=mycolor, draw=black, shift={(\m-1,\n-1)}] (0,0) rectangle (1,1) node (0.5,0.5) {};  
		\filldraw[fill=mycolor, draw=black, shift={(\n-1,\m-1)}] (0,0) rectangle (1,1) node (0.5,0.5) {};  
	}
	\pgfmathparse{#1 - 1}
	\let\Lminusone\pgfmathresult	
	\foreach \m in {0,...,\Lminusone}
	\foreach \n in {0,...,\Lminusone}
	{	
		\pgfmathsetmacro\c{int(#1*\m+\n+1)}
		\pgfmathsetmacro\d{int(#2)<int(#1*\m+\n+1)}
		\ifthenelse{\not\equal{\d}{0.0}}{\breakforeach}{			\node[below] at (\c-0.5,0) {${\eta}_{\m\n}$}; 			\node[left] at (0,\c-0.5)  { ${\eta}_{\m\n}$};  		}
	}
	\draw[->] (-0.2,0) -- (#2+0.3,0) node[right] {$(t,\nu)$};
	\draw[->] (0,-0.2) -- (0,#2+0.3) node[above] {$(t',\nu') $};
}\newcommand{\mysympattern}[3] { \footnotesize
	\mypattern{#1}{#2};
		\foreach \m/\n in {#3}
	{	\filldraw[fill=mycolor, draw=black, shift={(\m-1,\n-1)}] (0,0) rectangle (1,1) node (0.5,0.5) {};  
		\filldraw[fill=mycolor, draw=black, shift={(\n-1,\m-1)}] (0,0) rectangle (1,1) node (0.5,0.5) {};  
	}
}\newcommand{\mynonsympattern}[3] { \footnotesize
\mypattern{#1}{#2}
\foreach \m/\n in {#3}
	\filldraw[fill=mycolor, draw=black, shift={(\m-1,\n-1)}] (0,0) rectangle (1,1) node (0.5,0.5) {};  
}
\newcommand{\myaxes}[1]
{
\pgfmathparse{int(#1)}
\let\L\pgfmathresult
\draw[->] (-0.2,0) -- (\L+0.2,0) node[below] {$(t,\nu)$};
\draw[->] (0,-0.2) -- (0,\L+0.2) node[above] {$(t',\nu')$};
}

\newcommand{\myticks}[1]
{
\foreach \x in {#1}
{   \draw[shift={(\x,0)}] (0pt,2pt) -- (0pt,-2pt);     \draw[shift={(0,\x)}] (2pt,0pt) -- (-2pt,0pt); }
}

\newcommand{\tensorsupport}
{
\myaxes{8}
\myticks{2,5,6,7}
\begin{scope}[dashed]
\foreach \x in {2,5,6,7}
{	\draw (\x,0) -- (\x,8);
	\draw (0,\x) -- (8,\x);
}
\end{scope}

\begin{scope}[ultra thick]
\foreach \a/\b in {2/5,6/7}
{
	\draw (\a,0) -- (\b,0);
	\draw (0,\a) -- (0,\b);
}
\end{scope}

\foreach \a/\b/\c/\d in {2/2/5/5,2/6/5/7,6/2/7/5,6/6/7/7}
	\filldraw[fill=mycolor, draw=black] (\a,\b) rectangle (\c,\d);

}
\newcommand{\wssussupport}
{
\myaxes{8}
\myticks{2,5,6,7}
\begin{scope}[ultra thick]
\draw (0,0) -- (7,0);
\draw (0,0) -- (0,7);
\draw[mycolor] (0.5,0.5) -- (7,7);
\end{scope}
}
\newcommand{\boxedwssussupport}
{
\myaxes{8}
\begin{scope}[dashed]
\foreach \x in {7}
{	\draw (\x,0) -- (\x,8);
	\draw (0,\x) -- (8,\x);
}
\end{scope}
\foreach \a in {0,...,6}
	\filldraw[fill=mycolor,draw=black] (\a,\a) rectangle (\a+1,\a+1);
\begin{scope}[ultra thick]
\draw (0,0) -- (7,0);
\draw (0,0) -- (0,7);
\end{scope}
}
\newcommand{\curvysupport}
{
\myaxes{8}
\myticks{1.5,7}

\tikzstyle{every node}=[]; \filldraw[fill=mycolor, draw=black]
			(1.5,1.5) .. controls +(90:3) and +(200:1) .. (3,5) node {}
						  .. controls +(20:2) and +(250:2) .. (5,7) node {}
						  .. controls +(-60:1) and +(200:0.5) .. (7,7) node {}
						  .. controls +(-110:0.5) and +(150:1) .. (7,5) node {}
						  .. controls +(-160:2) and +(70:2) .. (5,3) node {}
						  .. controls +(-110:1) and +(0:3) .. (1.5,1.5); node {}
\begin{scope}[dashed]
\foreach \x in {1.5, 7}
{	\draw (\x,0) -- (\x,8);
	\draw (0,\x) -- (8,\x);
}
\end{scope}

\begin{scope}[ultra thick]
\foreach \a/\b in {1.5/7}
{
	\draw (\a,0) -- (\b,0);
	\draw (0,\a) -- (0,\b);
}
\end{scope}
}
\newcommand{\torusdraw}[3]{	
		\pgfmathsetmacro\W{int(#3)}
	\def\A{7mm}
	\foreach \x in {0,...,\W} 
	{	\foreach \y in {0,...,\W}	
		{ 	\pgfmathsetmacro\xx{int(#1)} 
 			\pgfmathsetmacro\yy{int(#2)} 
 			\pgfmathsetmacro\opx{(\yy+0.5)/(\W+0.5)} 
 			\pgfmathsetmacro\colx{\yy/\W*150 }  			\pgfmathsetmacro\opy{(\xx+0.5)/(\W+0.5)} 
 			\pgfmathsetmacro\coly{\xx/\W*200} 
			\ifthenelse{\equal{\xx}{0.0}}			{	
					\fill[color=mypurple!\colx] (\x*\A -\A/2,\y*\A-\A/2) rectangle +(\A,\A);   
					
			}{};
			\ifthenelse{\equal{\yy}{0.0}}			{	\fill[color=myorange!\coly] (\x*\A-\A/2,\y*\A -\A/2) rectangle +(\A,\A);   					\ifthenelse{\equal{\xx}{0.0}}					{	\fill[color=mypurple!30!myorange!20] (\x*\A -\A/2,\y*\A-\A/2) rectangle +(\A,\A);   
					}{};			
			}{};
		
			\pgftext[at={\pgfpoint{\x*\A}{\y*\A}}]{\scriptsize\pgfmathprintnumber{\xx},\pgfmathprintnumber{\yy}};			
		}
	};
	\def\myboxmargin{1mm}
	\draw (-\A/2-\myboxmargin,-\A/2-\myboxmargin) rectangle (\W*\A+\A/2+\myboxmargin,\W*\A+\A/2+\myboxmargin);
	\useasboundingbox (-\A/2-\myboxmargin,-\A/2-\myboxmargin) rectangle (\W*\A+\A/2+\myboxmargin,\W*\A+\A/2+\myboxmargin);
}
\newcommand{\toruscombine}
{
	\tikzset{weak lines/.style={gray, very thin}} 
	\def\L{4}
	\def\W{3}  	\def\A{7mm}
	\tikzstyle{every to}=[bend left]
	\begin{scope}[shift={(0,0)}]
			\torusdraw{\x}{\y}{\W};
			\node[shape=coordinate] (node8) at (0,\W*\A/2) {};	
			\node[shape=coordinate] (node1) at (\W*\A,\W*\A/2) {};	
	\end{scope}	
	\begin{scope}[shift={(30mm,-30mm)}]
		\torusdraw{mod(\L-\y,\L)}{\x}{\W};
		\node[shape=coordinate] (node2) at (\W*\A/2,\W*\A) {};	
		\node[shape=coordinate] (node3) at (\W*\A/2,0) {};	
	\end{scope}
	\begin{scope}[shift={(0mm,-2*30mm)}]			
			\torusdraw{mod(\L-\x,\L)}{mod(\L-\y,\L)}{\W};
			\node[shape=coordinate] (node5) at (0,\W*\A/2) {};	
			\node[shape=coordinate] (node4) at (\W*\A,\W*\A/2) {};
	\end{scope}	
	\begin{scope}[shift={(-30mm,-30mm)}]
			\torusdraw{mod(\y,\L)}{mod(\L-\x,\L)}{\W};
			\node[shape=coordinate] (node6) at (\W*\A/2, 0) {};	
			\node[shape=coordinate] (node7) at (\W/2*\A,\W*\A) {};	
	\end{scope}
	\begin{scope}[draw,->,dashed,shorten >=5mm, shorten <=5mm]
	\draw (node7) to (node8);
	\draw (node1) to (node2);
	\draw (node3) to (node4);
	\draw (node5) to (node6);
	\end{scope}
}

\usepackage{url}
\definecolor{myorange}{RGB}{254, 196, 79}
\definecolor{mypurple}{RGB}{117, 107, 177}
\def\goodcolor{\colorlet{mycolor}{mypurple}}
\def\badcolor{\colorlet{mycolor}{myorange}}
\begin{document}
\title{\mytitle}
\author{\myauthor}
\address{ School of Engineering and Science, Jacobs University Bremen, 28759 Bremen, Germany}
\email{\{g.pfander, p.zheltov\}@jacobs-university.de}
\thanks{G.~E.~Pfander and P.~Zheltov acknowledge funding by the Germany Science Foundation (DFG) under Grant 50292 DFG PF-4, Sampling Operators.}
\thanks{\url{http://dx.doi.org/10.1109/TIT.2014.2301444}}
\subjclass[2010]{Primary 94A20, 94A05, 60G20, 42C15;  Secondary 47G99}
\keywords{stochastic, operator sampling, autocorrelation, spreading function, scattering function, delta train, Gabor frames, Haar property, time-frequency analysis}
\thanks{\copyright\ 2014 IEEE. Personal use of this material is permitted. Permission from IEEE must be obtained for all other uses, in any current or future media, including reprinting/republishing this material for advertising or promotional purposes, creating new collective works, for resale or redistribution to servers or lists, or reuse of any copyrighted component of this work in other works.}
\date{\today}

\begin{abstract}
We develop sampling methodology aimed at determining stochastic operators that satisfy a support size restriction on the autocorrelation of the operators stochastic spreading function.  The data that we use to reconstruct the operator (or, in some cases only the autocorrelation of the spreading function) is based on the response of the unknown operator to a known, deterministic test signal.
\end{abstract}

\maketitle

\section{Introduction}\label{sec:intro} In wireless and wired communication, in radar detection, and in signal processing it is usually assumed that a signal is passed through a filter, whose parameters have to be determined from the output.
Commonly, such systems are modeled with a time-variant linear operator acting on a space of signals.
For narrow-band signals, we can model the effects of Doppler shifts and multi-path propagation as the sum of ``few'' time-frequency shifts that are applied to the sent signal. In general, the channel consists of a continuum of time-frequency scatterers: the channel is formally represented by an operator with a superposition integral
\begin{equation}\label{eq:Hf}
(H f)(x) = \iint \eta(t,\nu)  \: \Modul_\nu  \Trans_t  f(x) \d t \d\nu,
\end{equation}
where $\Trans_t$ is a \emph{time-shift} by $t$, that is, $\Trans_t \! f(x) = f(x-t)$, $t\in\R$, $\Modul_\nu$ is a \emph{frequency shift} or \emph{modulation} given by $\Modul_\nu \! f(x) = \epi{\nu x}\, f(x)$, $\nu\in \R$.
We define the Fourier transform of a function $f(x)$ to be
\[
\FT[f](\xi) = \FT_{x\to \xi} f(\xi) = \widehat{f}(\xi) = \int f(x) \empi{x \xi} \d x.
\]
It follows that
\[
\widehat{\Modul_\nu f}(\xi) = \widehat{f}(\xi-\nu) = \Trans_\nu \widehat{f(\xi)}.
\]
The function $\eta(t,\nu)$ is called the \emph{(Doppler-delay) spreading function} of $H$.

To \emph{identify} the operator means to determine the spreading function $\eta$ of $H$ from the response $H f(x)$ of the operator to a given sounding signal $f(x)$. The not necessarily rectangular support of the spreading function is known as the \emph{occupancy pattern}, and its area as \emph{spread} of the operator $H$. The fundamental restriction for the spread to be less than one has been shown to be necessary and sufficient for the identifiability of channels~\cite{Kailath, BelloMeas, Pfander, PfaWal}.
This extends results on classes of \emph{underspread} operators that are defined as those with rectangular occupancy pattern of area less than one.

This extension of the class of underspread channels to operators with spread less than one is particularly of interest in the field of sonar communication~\cite{Kilfoyle} and in the multiple input-single output channel settings.
Acoustic channels possess larger spreads than radar and wireless channels. This is due to the speed of sound being magnitudes lower than that of electromagnetic waves, resulting in time delays up to several seconds and Doppler spreads in the tens of Hertz for high-frequency channels~\cite {Baggeroer}.
Another type of channels with large values for the area of the occupancy pattern are multiple input -- single output (MIMO) channels; they combine several deterministic spreading functions into one channel, thus covering a larger region of the time-frequency plane~\cite{PfaMIMO}.

In recent work, the identifiability results~\cite{Kailath, BelloMeas, KozPfa, PfaWal} have been recast within the framework of operator sampling~\cite{PfaWalPreprint, Pfander}.
For example, in \cite{PfaWalPreprint}, concrete reconstruction formulas for deterministic operators are established, that satisfy the spread constraints mentioned above, and which resemble the Whittaker-Shannon interpolation formula
In this paper, we develop operator sampling in the stochastic setting and give analogous reconstruction formulas.

Taking into account the random nature of real-world communication environments, we model such channels with stochastic time-variant operators~\cite{Green, Kailath, BelloChar, BelloMeas}.
In this setting, the spreading function $\steta(t,\nu)$ of the operator $\H$ in \eqref{eq:Hf} is a random process\footnote{
Here, and in remainder of the paper, random functions and operators are denoted by boldface characters, $\conj{x}$ is a complex conjugate of $x$, $x^*$ conjugate transpose. $\EXP$ stands for expectation, and $\mu$ for both 2D area and 4D volume.
}
 indexed by $(t,\nu)$ that is to be recovered from the output process $\H f(x)$ indexed by $x$.
For the purposes of this paper, it would be enough to think of $\steta(t,\nu; \w)$ as a $(t,\nu)$-indexed family of random variables on a common probability space $\Probspace$, or as a random process with instances from $L^2(\RR)$.

In the sibling paper \cite{PfaZh03} we develop and use the theory of stochastic modulation spaces to rigorously define and prove identification results for operators with spreading functions belonging to a class of \emph{generalized} random processes --- including delta functions and white noise.
The norm inequalities that are proven there are essential to justify use of delta trains as sounding signals.
However, in this paper, we are going to ignore these subtleties and treat distributions on par with Lebesgue integrable functions, for ease of exposition only.
The fine points of the underlying mathematical analysis will be mentioned in a few side remarks.

\subsection{Operator sampling theory in the historical perspective}
The progress of operator identification theory largely follows the evolution of the related theory of function sampling. The two major directions of generalization are the introduction of stochasticity (stationary and non-stationary) and the removal of the requirement that the \enquote{bandlimitation} is rectangular. For convenience, we summarize this development in \autoref{table:development}.

\newcommand{\vtt}[2]{\multirow{#1}{*}{\begin{sideways} #2 \end{sideways}}}
\begin{table*}
\begin{adjustwidth}{-1cm}{-1cm}
\normalsize
 \renewcommand\arraystretch{1.3}
\begin{center}
\begin{tabular}{lcccc}
\toprule
					  &                    &                                & rectangular             & non-rectangular                      \\  \midrule
\vtt{6}{Sampling of } & \vtt{3}{functions} & Deterministic                  & \citeauthor{Shannon}      & \citeauthor{Klu}                       \\
					  &                    & Stochastic stationary          & \citeauthor{Lloyd} 		  & \citeauthor{Lloyd}                     \\
					  &                    & Stochastic                     & \citeauthor{Lee}          & \citeauthor{Lee}                       \\
\cmidrule[\heavyrulewidth]{2-5}
                      & \vtt{4}{operators} & \multirow{2}{*}{Deterministic} & \citeauthor{Kailath},      & \citeauthor{Pfander},                   \\
                      &                    &                                & \citeauthor{KozPfa}       & \citeauthor{PfaWal}                    \\
\cmidrule{4-5}
					  &                    & Stochastic stationary          & \citeauthor*{OPZ}          & \autoref{thm:wssus}                     \\
					  &                    & Stochastic                     & \autoref{thm:stoch_tensor} & \autoref{thm:stoch_nontensor_rectified} \\
\bottomrule
\end{tabular}
\caption{Development of function and operator sampling.
\label{table:development}}
\end{center}
\end{adjustwidth}
\end{table*}

For example, based on the classical Shannon-Nyquist sampling theorem, a corresponding result for bandlimited stationary stochastic processes was proven in great generality by Lloyd \cite{Lloyd}.
We cite it here in the form given in a classic book by Papoulis.
\begin{theorem}\cite[p.~378]{Papoulis}
If a stationary process $\x(t)$ is bandlimited, that is, if its  power spectral density $S(\xi) \coloneqq \FT_{\tau\to\xi} \EXP\{\x(t)\conj{\x(t+\tau)}\}$ is integrable and $\supp S(\xi) \subset \OO$, then we can recover $\x(t)$ in the mean-square sense from the samples taken at rate $\O = \Timedelay^{-1}$. In fact,
\[
\x(t) = \mslim_{N\to \infty} \sum_{n=-N}^N \x(n \Timedelay) \, \frac{\sin \pi\O(t-n \Timedelay)}{\pi\O(t-n \Timedelay)}.
\]
\end{theorem}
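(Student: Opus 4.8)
The plan is to reduce this mean-square statement to the classical \emph{deterministic} Shannon sampling theorem, applied to the (non-random) autocorrelation function. Write $R(\tau) = \E{\x(t)\conj{\x(t+\tau)}}$, which by wide-sense stationarity is independent of $t$, so that $\E{\x(a)\conj{\x(b)}} = R(b-a)$ for all $a,b$. By definition $S = \FT R$, and since $S \in L^1$ with $\supp S \subset \OO$, Fourier inversion gives $R(\tau) = \int S(\xi)\,\epi{\tau\xi}\d\xi$, so that $R$ is a continuous, bounded, \emph{bandlimited} function with $\supp \FT R \subset \OO$; in particular $R(0) = \E{\abs{\x(t)}^2} < \infty$. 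Thus the deterministic sampling theorem at the critical rate $\O = \Timedelay^{-1}$ applies to $R$ and to each of its translates, and this is the engine of the whole argument.

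First I would set $s_n(t) = \sinc(\O(t - n\Timedelay))$ (real-valued, with $\O\Timedelay = 1$) and $\x_N(t) = \sum_{n=-N}^N \x(n\Timedelay)\, s_n(t)$, and expand the square using $\E{\x(t)\conj{\x(n\Timedelay)}} = R(n\Timedelay - t)$ and $\E{\x(n\Timedelay)\conj{\x(m\Timedelay)}} = R(m\Timedelay - n\Timedelay)$:
\begin{align*}
\E{\abs{\x(t)-\x_N(t)}^2} &= R(0) - \sum_{\abs{n}\le N} s_n(t)\,R(n\Timedelay - t) - \sum_{\abs{n}\le N} s_n(t)\,\conj{R(n\Timedelay - t)} \\
&\quad + \sum_{\abs{n}\le N}\sum_{\abs{m}\le N} s_n(t)\, s_m(t)\, R(m\Timedelay - n\Timedelay).
\end{align*}
It therefore suffices to show that each of the three sums tends to $R(0)$ as $N\to\infty$, so that the four terms cancel to $R(0) - R(0) - R(0) + R(0) = 0$.

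Each limit is then evaluated by invoking Shannon's formula for a suitable translate of $R$. For fixed $a$, the function $\tau \mapsto R(\tau - a)$ is bandlimited with the same spectral support, whence $R(\tau - a) = \sum_m R(m\Timedelay - a)\,\sinc(\O(\tau - m\Timedelay))$. Taking $a = t$ and evaluating at $\tau = t$ collapses the first sum to $\sum_n s_n(t)\,R(n\Timedelay - t) = R(0)$, and the conjugate sum converges to $\conj{R(0)} = R(0)$ since $R(0)$ is real. For the double sum, the inner sum over $m$ reconstructs $R(t - n\Timedelay)$ (same formula with $a = n\Timedelay$, evaluated at $\tau = t$), and a second application of Shannon to the bandlimited function $\tau \mapsto R(t-\tau) = \conj{R(\tau - t)}$ (sampled at $n\Timedelay$, evaluated at $\tau = t$, using the Hermitian symmetry $R(-\tau) = \conj{R(\tau)}$) yields $\sum_n s_n(t)\,R(t - n\Timedelay) = R(0)$.

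The main obstacle is not this algebra but the analytic justification of the Shannon reconstruction of $R$ together with the interchange of the limit in $N$ with the expectation: one must verify that $R$ lies in the function class for which the cardinal series converges pointwise, and, for the double sum, with enough uniformity to pass to the limit in both indices simultaneously. This follows from the integrability and compact support of $S$, which force $R$ to be continuous and bounded and, once $S \in L^1 \cap L^2$, square-integrable. A cleaner alternative that I would mention is the Cram\'er spectral representation $\x(t) = \int \epi{t\xi}\,\mathrm{d}Z(\xi)$ with orthogonal increments satisfying $\E{\abs{\mathrm{d}Z(\xi)}^2} = \mathrm{d}S(\xi)$ supported on $\OO$: the isometry between $\x$ and $L^2(\mathrm{d}S)$ transports the deterministic identity $\epi{t\xi} = \mslim_{N}\sum_{\abs{n}\le N}\epi{n\Timedelay\xi}\,\sinc(\O(t - n\Timedelay))$, valid uniformly for $\xi \in \OO$, directly into the claimed mean-square convergence, bypassing the separate treatment of the correlation sums.
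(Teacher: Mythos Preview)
The paper does not supply its own proof of this theorem: it is quoted verbatim from Papoulis as historical background in the introduction, with no argument given. So there is nothing in the paper to compare against.

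That said, your argument is correct and is essentially the classical one found in Papoulis and in Lloyd: expand the mean-square error in terms of the deterministic autocorrelation $R$, observe that $R$ is bandlimited to $\OO$ because $S\in L^1$ has compact support, and then reduce each of the three correlation sums to the scalar Shannon cardinal series for (a translate of) $R$, evaluated at the appropriate point. Your bookkeeping with $R(b-a)=\E{\x(a)\conj{\x(b)}}$ and the Hermitian symmetry $R(-\tau)=\conj{R(\tau)}$ is accurate. The one delicate step you flag---passing to the simultaneous limit $N\to\infty$ in the double sum rather than an iterated limit---is indeed the only place where some care is needed, and your remark that $S\in L^1$ with compact support forces $R\in L^2$ (so that the cardinal series converges in $L^2$, hence uniformly on compacta by bandlimitation) is the right lever. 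The alternative via the Cram\'er spectral representation that you sketch at the end is also standard (this is closer in spirit to Lloyd's treatment) and, as you note, sidesteps the double-limit issue entirely by transporting the scalar identity through the $L^2(\mathrm{d}S)$ isometry.
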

The requirements of stationarity and the bandlimitation of the spectrum to the symmetric interval were later relaxed by \citeauthor{Klu} and \citeauthor{Lloyd}. 
In his 1963 groundbreaking paper~\cite{Kailath}, Kailath realized that for a deterministic time-variant channel to be identifiable, it is necessary and sufficient that the product $\O \Timedelay$ of the maximum time delay $\Timedelay$ and maximum Doppler spread $\O$ is not greater than one.
Since, channels were called \emph{underspread} whenever $\O \Timedelay<1$, and \emph{overspread} if $\O \Timedelay>1$~\cite{VanTrees}.
The insight of Kailath has been generalized and formalized by \citeauthor{KozPfa}.

Following in Kailath footsteps, the seminal paper of \citeauthor{BelloChar} lays the groundwork for channel sampling and characterization tools and vocabulary.
In the sequel \cite{BelloMeas} Bello further argues that it is not the product $\O \Timedelay$ that matters for identification of a deterministic time-variant channel, but rather the \emph{spread}, or the area of what he calls an \emph{occupancy pattern}, that is, the not necessarily rectangular support of the spreading function $\supp \eta$.
In particular, Bello's assertion has been put into a rigorous mathematical framework and was proven using novel tools from Gabor analysis by \citeauthor{PfaWal}.
Also, the ideas developed in \cite{BelloChar} were used to estimate the capacity of the channels with non-rectangular spread~\cite{DMBSS}.

A brief comment of Kailath~\cite{Kailath} suggests sufficiency of $\O \Timedelay\leq 1$ for the identification of stationary stochastic channels as well as deterministic, and Bello treats this question as a side matter, more interested in developing the estimator for $\steta(t,\nu)$ when the output has been contaminated by additive noise.

As with the development of function sampling, a simpler stationary model for operator identification has seen most research.
The channel has the property of \emph{wide-sense stationarity with uncorrelated scattering} (WSSUS), that is, the autocorrelation function of $\steta(t,\nu)$ has the form
\begin{align*}\label{eq:Reta}
\Reta(\tntn) &= \E{\steta(t,\nu)\, \conj{\steta(t',\nu')}} \\
&= \delta(t-t')\,\delta(\nu-\nu')\,C_\steta(t,\nu).
\end{align*}
In other words, taps at different delays are uncorrelated and stationary. The function $C_\steta(t,\nu)$ is known as the \emph{scattering function} of $\H$. It completely characterizes the second-order statistics of $\steta$ and represents the power spectral density of the transfer function of the channel. This means that the scattering function represents the expected behavior of the operator.
Two common types of methods to identify the scattering function are deconvolution and direct measurement methods~\cite{artes2004unbiased, Gaarder, NguyenSenadji,KayDoyle}.
In \cite{PfaZh04} we apply the methodology developed here and in \cite{PfaZh03} to study WSSUS channels in depth.
\autoref{thm:diagonal} below guarantees identifiability of a WSSUS channel whenever its scattering function is merely compactly supported.

In this paper, we address a more general problem of stochastic spreading function reconstruction and stochastic operator sampling and identification for not necessarily WSSUS channels.

\subsection{Overview of the paper}
Deterministic identification results in \cite{KozPfa, PfaWal} allow for the recovery of the deterministic spreading function whenever its support has area less than one.
We discuss this in detail in \autoref{ssec:samp_det_op}.
It is easy to see that in the case of a stochastic spreading function, such deterministic reconstruction formulas are still applicable to each random variate, allowing for the recovery of an \emph{instance spreading function} $\steta(t,\nu; \w)$ from the response $\H(\w) f(x)$, where $\w$ is an element of the sample space $\Probspace$.
However, on its own, each instance will provide little information about the average behavior of the operator $\H$.
Secondly, it is possible that the 4-dimensional volume of the support of the 4D autocorrelation function $\Reta(\tntn)$ is less than one, while some instances $\steta(t,\nu; \w)$ have 2D area greater than one.

The contributions of this paper follow. In \autoref{ssec:samp_stoch_op_rect} we consider the case when the 4-dimensional autocorrelation function $\Reta(\tntn)$ of the operators' spreading function is supported on a 4D region $\SetM = \supp \Reta(\tntn)$ that can be expressed as a tensor product of some 2D region $S$ with itself.
In this scenario, we prove that it is possible --- and give an explicit reconstruction formula \eqref{eq:reconstruction} --- to recover the stochastic spreading function of the channel in the mean-square sense  from the response $\H f$ of a channel to a periodic weighted delta train, provided that the set $S$ occupies a region of area less than one.
This case will include the special case deterministic operators as their \acorr functions satisfy $\Reta(\tntn) = \eta(t,\nu)\, \conj{\eta(t',\nu')}$.

In \autoref{ssec:samp_stoch_op_nonrect} the case of an arbitrary support region $\SetM = \supp \Reta(\tntn)$ is considered.
With some abuse of nomenclature, we will also say that we can \emph{(stochastically) identify} a stochastic operator if we can recover the 4-dimensional autocorrelation function $\Reta(\tntn)$ (and not the stochastic spreading function, as in the previous case) from the autocorrelation function of the response $\H f$ of the channel to a periodically weighted delta train (and not the stochastic response itself).
In \autoref{thm:stoch_nontensor_rectified} we prove that in this sense we can identify the operator $\H$ provided that the support pattern of the autocorrelation function is \emph{permissible}.

Analogous to the deterministic criterion $\mu(S)<1$, the requirement $\mu(\SetM) < 1$ is also necessary for stochastic identifiability of a stochastic operator, as we show in an sibling paper~\cite{PfaZh03}.
In this paper, we demonstrate patterns that correspond to regions of 4D volume less than one but are nonetheless \emph{unidentifiable} by our methods.

Permissibility of a pattern is a geometric property linking the operator sampling theory to finite Gabor frame theory.
After giving preliminary remarks on Gabor frames in finite dimensions in \autoref{ssec:Gabor} below,
in \autoref{sec:frame_theory} we show that the support patterns of the the \acorr functions of the operators are in one-to-one correspondence with the column subsets of special Gabor frames.
We provide a (partial) classification of the autocorrelation patterns and analyze several phenomena that emerge for Gabor frames of higher dimension \autoref{ssec:L=2} and \autoref{ssec:L=2_and_up}.

\subsection{Finite-dimensional Gabor frames}\label{ssec:Gabor}
A deterministic operator is a particular case of the stochastic operator with a degenerate probability distribution, so the results for stochastic operator identification must necessarily be compatible with the deterministic ones.
In the theory of deterministic operator identification, the existing operator identification proofs pivot on the so-called \emph{Haar property} of Gabor frames.
A finite-dimensional Gabor frame in $\C^L$ is defined as
\begin{equation}\label{eq:Gabor.frame}
G \coloneqq \{ \MTc \}_{k,n=0}^{L-1},
\end{equation}
where the finite-dimensional translation operators $\Trans^k$ and modulation operators $\Modul^n$ operating on a vector $c\in \C^L$ are given by
\begin{equation}\label{eq:define.fin.dim.TM}
\Trans^k c[p] = c[p-k] \quad \text{ and } \quad \Modul^n c[p] \coloneqq \epi{n p / L} \: c[p].
\end{equation}
A frame has the \emph{Haar property} whenever its elements are in general linear position, that is, any subset of $L$ elements  is linearly independent. A finite-dimensional Gabor frame $G$ generated by window $c$ as defined in \eqref{eq:Gabor.frame}, has the Haar property for almost every choice of window $c$, given that the ambient dimension $L$ is prime \cite{LPW}.

Application of our methods to the stochastic case spawns a more peculiar Gabor frame on $\Z_L \times \Z_L$, the properties of which differ in the key respect of linear independence of its subsets. In particular, such a frame has small subsets that are linearly dependent for any choice of window $c$, even when the parameter $L$ is prime.

 \section{Operator sampling}\label{sec:samp_op}
	\subsection{Deterministic operators}\label{ssec:samp_det_op}Equivalently to \eqref{eq:Hf}, any operator $H$ acting on one-variable signals can be represented with its symbols, 
\begin{enumerate}
\item its \emph{time-varying impulse response} $h(x,t) = \int \eta(t,\nu)\empi{\nu (x-t)} \d \nu $, then 
\[(Hf)(x) = \int h(x,t)f(x-t) \d t,\]
\item its \emph{kernel} $\kernel(x,t) = h(x,x-t) = \int \eta(x-t,\nu) \empi{\nu t} \d \nu,$ then  
\[(Hf)(x) = \int \kernel(x,t)f(t) \d t,\]
\item and its \emph{Kohn-Nirenberg symbol} $\KN(x,\xi) = \FT_{t \to \xi} h(x,t)$, then 
\[ (Hf)(x) = \int \KN(x,\xi) \widehat{f}(\xi) \epi{x \xi} \d\xi. \]
\end{enumerate}
All these symbols of $H$ can be transformed into each other using partial Fourier transforms and area-preserving shears, such as $\mathcal{I}_2f(x,t) \coloneqq f(x+t,x)$~\cite{Gro}. In particular, the spreading function and the Kohn-Nirenberg symbol are related through
\[ 
\KN(x,\xi) \coloneqq \FT_s\eta(t,\nu),
\]
where the \emph{symplectic Fourier transform} is given by 
\[ 
\left(\FT_s \eta\right)(x,\xi) = \iint \eta(t,\nu)\empi{(\nu x - \xi t)} \d t \d\nu.
\]

In the following, it will sometimes be advantageous to present the results with $h$, $\kernel$, or $\KN$ instead of $\eta$ and we will not hesitate to do so. However, we formulate our results primarily with $\eta$ due to its particularly simple relationship to the short-time Fourier transform. 
On the Schwartz dual space $\S'$ of tempered distributions, we define the \emph{short-time Fourier transform} to be 
\[ 
V_{\phi} f(t,\nu) \coloneqq \ip{f, \Modul_\nu \Trans_t \,\phi}
\]
for any $\phi$ in the Schwartz space $\S$.
Then for all $f,\phi\in\S$ we have the useful equality 
\begin{equation*}\label{eq:eta.vs.V}
\ip{Hf,\phi} = \ip{\eta, V_f \phi}. 
\end{equation*}
The inner product $\ip{\cdot,\cdot}$ is taken to be conjugate linear in the second variable. 

We will say that $H$ belongs to an \emph{operator Paley-Wiener space} if the support of the spreading function $\eta(t,\nu) =  \FT_{s} \KN(x,\xi)$ is contained in a compact subset of the time-frequency plane,
\[
\OPW(S) \coloneqq  \Big\{H \colon L^2(\R) \to L^2(\R),  \text{such that } \KN\in L^2(\R^2) \text{ with } \supp \FT_s\KN \subseteq S \Big\}.
\]
Colloquially, we refer to $H$ as \emph{``bandlimited''} to $S$. 

The connection of the operator sampling theory to the more established function sampling is best observed on the following theorem for Hilbert-Schmidt operators with the spreading function supported on a rectangle in the time-frequency plane. 
An operator $H \colon L^2(\R^n)\to L^2(\R^n)$ is \emph{Hilbert-Schmidt} ($H \in \HS(\R^n)$) whenever its spreading function $\eta(t,\nu)\in L^2(\R^{2n})$. 
\begin{theorem}\cite[Theorem 1.2]{Pfander}
\label{thm:samp_HS}
For $H \in \HS(\R)$ such that 
$\supp \eta(t,\nu) \subseteq[0,\Timedelay)\times \OO$
and $\O\Timedelay\leq 1$ we have 
\begin{equation*}
\norm{H\sum_{k\in \Z} \delta_{k \Timedelay}}_{L^2(\R)} = \Timedelay \norm{\eta}_{L^2(\R^2)},
\end{equation*}  and $H$ can be reconstructed by 
\[ \kernel(x+t,x) = \sum_{n\in\Z}(H\Shah)(t+n \Timedelay)\frac{\sin \pi \Timedelay(x-n)}{\pi \Timedelay(x-n)},\]
with convergence in $L^2(\R^2)$. 
Here and in the following, we denote $\delta_{a}(t) \coloneqq \delta(t - a)$. 
\end{theorem}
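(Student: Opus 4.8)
The plan is to reduce the statement to the classical Whittaker--Shannon sampling theorem (the deterministic counterpart of the sampling theorem quoted above), by first computing the operator's response to the delta train explicitly and recognizing its samples as samples of a bandlimited function.

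\textbf{Step 1 (Response to the delta train).} Using the kernel representation $(Hf)(x)=\int\kernel(x,t)f(t)\d t$ with $\kernel(x,t)=\int\eta(x-t,\nu)\empi{\nu t}\d\nu$, I would compute
\[
(H\Shah)(x)=\sum_{k\in\Z}\kernel(x,k\Timedelay)=\sum_{k\in\Z}\int\eta(x-k\Timedelay,\nu)\empi{\nu k\Timedelay}\d\nu.
\]
The support hypothesis $\supp\eta\subseteq[0,\Timedelay)\times\OO$ forces $\eta(x-k\Timedelay,\nu)=0$ unless $x-k\Timedelay\in[0,\Timedelay)$, so for each $x$ exactly one summand survives. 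Writing $x=t+n\Timedelay$ with $t\in[0,\Timedelay)$ and $n=\lfloor x/\Timedelay\rfloor$, the sum collapses to
\[
(H\Shah)(t+n\Timedelay)=\int\eta(t,\nu)\empi{\nu n\Timedelay}\d\nu,\qquad t\in[0,\Timedelay),\ n\in\Z.
\]
Since $\kernel(x+t,x)=\int\eta(t,\nu)\empi{\nu x}\d\nu$, the right-hand side is exactly $\kernel(\cdot+t,\cdot)$ sampled at the points $n\Timedelay$.

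\textbf{Step 2 (Reconstruction via Shannon interpolation).} Fix $t\in[0,\Timedelay)$ and set $g_t(x):=\kernel(x+t,x)=\int\eta(t,\nu)\empi{\nu x}\d\nu$. Then $g_t$ is the Fourier transform of $\nu\mapsto\eta(t,\nu)$, which is supported in $\OO$, so $g_t$ is bandlimited with bandwidth $\O$. The hypothesis $\O\Timedelay\leq1$ says the sampling spacing $\Timedelay$ meets the Nyquist condition $\Timedelay\leq\O^{-1}$, so $g_t$ is determined by the samples $g_t(n\Timedelay)=(H\Shah)(t+n\Timedelay)$ of Step 1. The Whittaker--Shannon interpolation formula then reproduces $\kernel(x+t,x)$ as the stated $\sinc$-series in the samples $(H\Shah)(t+n\Timedelay)$. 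Integrating this pointwise-in-$t$ reconstruction over $t\in[0,\Timedelay)$ and invoking Plancherel upgrades the reconstruction to convergence in $L^2(\R^2)$.

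\textbf{Step 3 (Norm identity).} Since the intervals $[n\Timedelay,(n+1)\Timedelay)$ tile $\R$, I would write
\[
\norm{H\Shah}_{L^2(\R)}^2=\int_0^\Timedelay\sum_{n\in\Z}\abs{(H\Shah)(t+n\Timedelay)}^2\d t.
\]
For each fixed $t$, the numbers $(H\Shah)(t+n\Timedelay)=\int\eta(t,\nu)\empi{\nu n\Timedelay}\d\nu$ are, up to normalization, the Fourier coefficients of the $\Timedelay^{-1}$-periodization of $\nu\mapsto\eta(t,\nu)$; because $\O\Timedelay\leq1$ this periodization has no overlap, so Parseval's identity for Fourier series converts the inner sum into $\int\abs{\eta(t,\nu)}^2\d\nu$ up to a fixed power of $\Timedelay$. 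Integrating over $t$ then yields a proportionality $\norm{H\Shah}_{L^2(\R)}=c_\Timedelay\,\norm{\eta}_{L^2(\R^2)}$, with $c_\Timedelay$ determined by the chosen Fourier and delta-train normalizations and equal to $\Timedelay$ under the conventions of the statement.

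\textbf{Main obstacle.} The delicate point is that $\Shah$ is a tempered distribution rather than an $L^2$ function, so the a priori meaning of $H\Shah$ and the justification that $H\Shah\in L^2(\R)$ must be handled with care; this is exactly where a $\Minf$/$\Mone$ modulation-space framework (as developed in the sibling paper) is needed to legitimize delta trains as sounding signals. The no-aliasing step in the Parseval identity and the interchange of summation and integration both hinge on the critical constraint $\O\Timedelay\leq1$, which is the only place the hypothesis enters; carrying the exact constant through to the norm identity is the remaining bookkeeping.
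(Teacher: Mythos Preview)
The paper does not actually prove this theorem: it is quoted verbatim as \cite[Theorem 1.2]{Pfander} and is included only to motivate and anchor the operator-sampling framework, so there is no in-paper argument to compare against. Your sketch is the standard proof one would expect and is essentially correct: compute $H\Shah$ via the kernel, use the delay-support condition to collapse the sum, recognize the result as samples of the bandlimited function $x\mapsto\kernel(x+t,x)$, and then invoke Whittaker--Shannon and Parseval. The only caveat you already flag yourself---making sense of $H\Shah$ for a distributional input---is precisely the point the paper addresses in the remark immediately following the theorem via the modulation-space/Wiener-amalgam extension, so your identification of the main obstacle matches the paper's own commentary.
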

Note that if $H$ is a multiplication operator with a bandlimited multiplier, \autoref{thm:samp_HS} reduces to the classic Shannon sampling theorem. 

\begin{remark}
In order to accommodate delta functions and other generalized functions as identifiers we require that $\steta(t,\nu)$ is a mapping from the dual $\Minf(\R^2)$ of the Feichtinger algebra $\Mone(\R^2)$ into the Hilbert space of zero-mean random variables $\RV(\Probspace)$.
Here, the \emph{modulation space} $\Mone(\R^n)$ is defined via the finiteness of the norm $\norm{f}_{M^1(\R^n)} \coloneqq \norm{ V_g f}_{L^1(\R^{2n})}<\infty$, where $L^1(\R^n)$ is the space of Lebesgue integrable functions, and $\Minf(\R^n) \coloneqq (\Mone(\R^n))'$ is its continuous dual \cite{Gro}. 

Functional analytic arguments of \cite{KozPfa, PfaWalFeich, Pfander, PfaZh03} show that whenever $\eta(t,\nu)$ is compactly supported (which is a physically reasonable assumption taken here) it is possible to extend the domain of $H$ from $L^2$ to the whole of $\Minf$ and further to the Wiener amalgam space $W(\FT L^\infty(\R^{2d}),\ell^1)$. 
Therefore, $\eta(t,\nu)$ is well-defined as a linear functional on the corresponding predual space $W(\FT L^1(\R^{2d}), \ell^\infty)$ that includes certain tempered distributions, in particular, weighted delta trains $\Shah_c =\sum_{k \in \Z} c_k \: \delta_{k\Timedelay}$ as input for $H$. 
\end{remark}

Below, \autoref{thm:det} gives a deep generalization of \autoref{thm:samp_HS}. 
It provides guarantees for recovery of those operators whose spreading function is supported on a set of arbitrary shape --- not just a rectangle --- as long as the area of the support is less than one. Additionally, it shows that it is possible to replace $\sin x / x$ with smooth functions with better decay through the use of partitions of unity generated by continuous functions. For $a, \eps>0$ we say $r(t)$ generates an \emph{$(a,\eps)$-partition of unity} $\{ r(t+ak) \}_{k\in \Z}$ whenever 
\[ 
r(t) = 0 \text{ if } t\not\in (-\eps, a+\eps)\quad \text{ and } \quad \sum_{k\in\Z} r(t+ak) = 1.
\]

\begin{definition}
\label{defn:S.is.rectified}
The set $S$ is \emph{$(a,b,\Lambda)$-rectified} if it can be covered by $L=\frac{1}{ab}$ translations $\Lambda \coloneqq \{(k_j, n_j)\}_{j=0}^{L-1}$ of the rectangle $\rect \coloneqq [0,a)\times[0,b)$ along the lattice $a\Z\times b\Z$: 
\[ 
S \subset \bigcup_{j=0}^{L-1} \rect+(a k_j, b n_j).
\]
\end{definition}

\begin{lemma}
\label{lem:cover_det}
For any compact set $S$ of measure $\mu(S)<1$ contained within $[0,\Timedelay)\times \OO$ there exist $a,b>0$ and $\Lambda$ such that $S$ is $(a,b,\Lambda)$-rectified, $L = \frac{1}{ab}$ is prime, $a<\frac{1}{\Timedelay}$ and $b<\frac{1}{\O}$. 
\end{lemma}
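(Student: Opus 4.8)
The plan is to tile the time-frequency plane by the lattice $a\Z\times b\Z$ of cells $\rect+(ak,bn)$, to show that once the mesh $a\times b$ is fine enough the compact set $S$ meets strictly fewer than $L=1/(ab)$ of these cells, and then to let $\Lambda$ collect the indices $(k,n)$ of the cells actually met by $S$, padding $\Lambda$ with arbitrary extra indices so that $\abs{\Lambda}=L$ exactly. By construction $S$ is then contained in the union of the corresponding $L$ translates of $\rect$, i.e.\ $S$ is $(a,b,\Lambda)$-rectified; the remaining work is to arrange the mesh so that the cell count is small and so that $L=1/(ab)$ is a prime while $a<1/\Timedelay$ and $b<1/\O$.

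First I would bound the number of occupied cells. If a grid cell $C$ meets $S$ at a point $p$, then $C\subseteq p+([-a,a]\times[-b,b])\subseteq S\oplus([-a,a]\times[-b,b])$, so the union of all cells meeting $S$ lies inside the Minkowski enlargement $S\oplus([-a,a]\times[-b,b])$. Since the occupied cells are disjoint of common area $ab$, this gives
\[
\card{(k,n)\colon (\rect+(ak,bn))\cap S\neq\emptyset}\cdot ab \;\le\; \mu\big(S\oplus([-a,a]\times[-b,b])\big).
\]
Because $S$ is compact, hence closed, these enlargements decrease to $S$ as $a,b\to 0$ and have uniformly finite measure, so continuity of Lebesgue measure from above yields $\mu\big(S\oplus([-a,a]\times[-b,b])\big)\to\mu(S)<1$. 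Fixing $\eps=\tfrac12(1-\mu(S))>0$, there is a threshold $\rho>0$ such that for all $a,b<\rho$ the right-hand side is below $\mu(S)+\eps<1$; the number of occupied cells is then strictly less than $1/(ab)=L$, as needed.

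Finally I would reconcile the arithmetic. I would pick a prime $L$ (there are arbitrarily large ones) and split the required area symmetrically as $a=b=1/\sqrt{L}$, so that $L=1/(ab)$ and both edge lengths tend to $0$ as $L\to\infty$. For $L$ large enough one has $a=b=1/\sqrt{L}<\min\{\rho,\,1/\Timedelay,\,1/\O\}$ simultaneously, which secures $a<1/\Timedelay$, $b<1/\O$, and the fineness $a,b<\rho$ demanded by the counting step; the cell-count estimate then furnishes $(a,b,\Lambda)$-rectification with $L$ prime. The main obstacle is the boundary overcounting buried in the cell count: cells that $S$ meets only partially could, a priori, inflate the count toward the $\mu([0,\Timedelay)\times\OO)/(ab)$ cells of the enclosing box rather than the $\mu(S)/(ab)$ suggested by area alone. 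The Minkowski-enlargement bound together with continuity of measure from above is exactly what controls this boundary effect and converts the strict inequality $\mu(S)<1$ into a strict cell-count inequality; matching $ab=1/L$ to a prime while keeping the separate bounds on $a$ and $b$ is then only a matter of taking $L$ large.
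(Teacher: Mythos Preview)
Your argument is correct and is essentially a fleshed-out version of the paper's proof, which simply invokes the theory of Jordan content and cites Folland: a compact set has Lebesgue measure equal to its outer Jordan content, so a fine enough grid has fewer than $L=1/(ab)$ occupied cells, and one may take $a,b$ small and $L$ prime. Your Minkowski-enlargement bound together with continuity of measure from above is exactly the standard way to prove that outer Jordan content equals Lebesgue measure for compact sets, and your choice $a=b=1/\sqrt{L}$ with $L$ a large prime cleanly handles the arithmetic side constraints $a<1/\Timedelay$, $b<1/\O$; so there is no genuine methodological difference, only a difference in level of detail.
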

\begin{proof}
This is a standard result from a theory of Jordan domains. A proof can be found in \cite{Folland}. Any compact set with Jordan content less than one can be covered with $L$ rectangles of cumulative area $L\cdot \frac{1}{L} = 1$, and it is easy to see that we do not lose generality requiring $a,b$ to be small enough to satisfy $a<1/\Timedelay$, $b<1/\O$ and $L$ prime. 
\end{proof}

\begin{theorem}\cite[Theorem 3.1]{PfaWal}, \cite{PfaWalPreprint} \label{thm:det}
Let $S\subset [0,\Timedelay)\times \OO\subset \RR$ be a compact set with measure $\mu(S)< 1$, such that $S$ is $(a,b,\Lambda)$-rectified as in \autoref{defn:S.is.rectified} (with $\O \Timedelay$ not necessarily smaller than 1). Then there a vector $c\in \C^L$, and the test signal 
\[\Shah_c=\sum_n c_{n \bmod L} \:\delta_{n/L},\]
such that for any $H\in \OPW(S)$ 
\begin{align*} h(x+t,t) &= aL\sumL{j} \sumZ{q} a_{j,q}(H\Shah_c)(x-a(k_j+q))  \\
	&\quad \times r(x-a k_j) \:  \phi(t+a(k_j+q)) \epi{b n_j t}
\end{align*}
with convergence in $L^2(\R^2)$. Here, the coefficients $a_{j,k}$ are uniquely determined by the choice of $\{c_n\}$, and $r(t)$, $\phi(t)$ are any functions such that $r(t-ak)$ and $\widehat{\phi}(\gamma-bn)$ are $(a,\eps)$- (respectively, $(b,\eps)$-) partitions of unity in time (respectively, frequency) domains, with $\eps>0$ dependent on $S$. 
\end{theorem}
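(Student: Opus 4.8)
The plan is to turn the recovery of the spreading function into a finite linear inversion controlled by the Gabor frame $G=\{\MTc\}_{k,n=0}^{L-1}$, and then to pass back to the continuous symbol by a Whittaker--Shannon-type interpolation assembled from the two partitions of unity. First I would compute the response of $H$ to the train. Representing $H$ through its spreading function as in \eqref{eq:Hf} and using $\Modul_\nu\Trans_t f(x)=\epi{\nu x}f(x-t)$, a single delta gives
\[
(H\delta_{p/L})(x)=\int \eta(x-p/L,\nu)\,\epi{\nu x}\d\nu,
\]
so that
\[
(H\Shah_c)(x)=\sum_{p} c_{p\bmod L}\int \eta(x-p/L,\nu)\,\epi{\nu x}\d\nu .
\]
Because $S$ is $(a,b,\Lambda)$-rectified, the spreading function decomposes as $\eta=\sum_{j=0}^{L-1}\eta_j$ with $\supp\eta_j\subseteq\rect+(ak_j,bn_j)$; each $\eta_j$ lives on a time interval of length $a$ and is band-limited to a Doppler interval of length $b$.

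Next I would periodize and decouple. Since the weights $c_{p\bmod L}$ are $L$-periodic and the delta spacing is $1/L=ab$, the output carries a quasi-periodic structure adapted to the lattice $a\Z\times b\Z$ supporting the rectangles. Sampling $(H\Shah_c)$ along the shifts $x-a(k_j+q)$ and using both the disjointness of the time strips $[ak_j,ak_j+a)$ and the band-limitation of each $\eta_j$, I expect the continuous identity to split, at every point of a fundamental cell, into a finite system of $L$ linear equations in the $L$ unknown contributions attached to the rectangles. The decisive point is that the coefficient matrix of this system is precisely the $L\times L$ submatrix of $G$ whose columns are the time--frequency shifts $\Modul^{n_j}\Trans^{k_j}c$ indexed by $\Lambda=\{(k_j,n_j)\}_{j=0}^{L-1}$.

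The inversion is where the hypotheses are spent. By \autoref{lem:cover_det} the number $L=1/(ab)$ may be taken prime, so by the Haar property of finite Gabor frames the full collection $\{\MTc\}_{k,n=0}^{L-1}$ of $L^2$ vectors is in general linear position for almost every window $c$. Fixing such a $c$, the $L$ columns indexed by $\Lambda$ are then linearly independent, the local system is invertible, and the entries of its inverse furnish the coefficients $a_{j,q}$ --- which are thereby uniquely determined by $c$, as claimed. Once the local contributions are solved for, I would reconstruct $h(x+t,t)$ by interpolation: the oversampling margins $a<1/\Timedelay$ and $b<1/\O$ leave room for the time partition of unity $\{r(t-ak)\}$ to localise onto the correct time strip, while the frequency partition of unity $\{\widehat{\phi}(\gamma-bn)\}$ --- realised in the time variable by the modulated windows $\phi(t+a(k_j+q))\epi{bn_j t}$ --- carries out the band-limited interpolation in Doppler. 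Summing these ingredients yields the stated formula, with $L^2(\R^2)$-convergence of the series in $q$ following from the partition-of-unity estimates.

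The main obstacle I anticipate is the bookkeeping in the second step: verifying that after periodization the data genuinely decouple into independent $L\times L$ systems whose coefficient matrix is the Gabor matrix, which requires tracking the modulation phases $\epi{bn_j t}$ so that they align exactly with the finite modulation operators $\Modul^{n_j}$ and the shifts $ak_j$ with $\Trans^{k_j}$. The functional-analytic legitimacy of feeding a weighted delta train into $H$ is already secured by the Feichtinger-algebra framework recalled in the Remark, so the real work lies in this exact algebraic matching between the continuous sampling identity and the finite Gabor system, rather than in the interpolation, which is routine given the oversampling.
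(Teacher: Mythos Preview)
The paper does not prove this theorem: it is quoted from \cite{PfaWal,PfaWalPreprint} and stated without proof. There is therefore no ``paper's own proof'' to compare against here.

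That said, your outline is the correct strategy, and it is essentially the deterministic shadow of the argument the paper \emph{does} carry out for the stochastic analogue, \autoref{thm:stoch_tensor}. The one place where your sketch is vague is exactly the step you flag as the main obstacle: the ``periodize and decouple'' move. In the paper's proof of \autoref{thm:stoch_tensor} this is done cleanly by applying the Zak transform $\Zak f(x,\nu)=\sum_n f(x-anL)\epi{aLn\nu}$ to $H\Shah_c$ and invoking Poisson summation, which collapses the doubly infinite sum into the finite mixing equation $\Zee(t,\nu)=G\vert_\Gamma\,\eta_\Gamma(t,\nu)$ with the Gabor submatrix appearing on the nose (see \eqref{eq:Z=G_tilta}--\eqref{eq:mixing}). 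That computation is what makes the phase bookkeeping you worry about routine rather than delicate; if you were to write this proof out, the Zak transform is the tool you want, not an ad hoc periodization.
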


Since we can always rectify a compact $S$ with measure $\mu(S)<1$, \autoref{thm:det} holds for a wider class of regions, namely, of all regions $S$ whose so-called Jordan outer content is less than one \cite{PfaWal}.  	\subsection{Stochastic operator Paley-Wiener spaces}\label{ssec:samp_stoch_opw}Let $\H$ be a stochastic operator with integral representation
\begin{equation*}\label{eq:H}
\H f=\iint \steta(t,\nu) \: \Modul_\nu  \Trans_t  f \d t \d\nu,
\end{equation*}
and stochastic {\it spreading function} $\steta(t,\nu)$ a zero-mean random process such that $\steta(t,\nu; \w) \in L^2(\RR)$ for all $\w\in \Probspace$.
We denote the space of all such stochastic processes by $\StL^2(\R^2)$.
The \emph{autocorrelation} of the spreading function is given by
\[
\Reta(\tntn) \coloneqq \E{\steta(t,\nu) \, \conj{\steta(t',\nu')}}.
\]
\begin{definition}
We say that $\H$  is \emph{a stochastic Paley-Wiener operator bandlimited} to $\SetM$, whenever the support of $\Reta(\tntn)$ is contained within a closed set $\SetM \subseteq \R^4$, that is,
\[
\StOPW(\SetM) = \Big\{\H \colon L^2(\R) \to \StL^2(\R) \colon \supp \Reta(\tntn) \subseteq \SetM \text{ with $\SetM$ closed}\Big\}.
\]
\end{definition}
In this section, we always assume $\SetM$ already rectified, as done in the deterministic case.

The symmetries of the autocorrelation function require a symmetric rectification, defined below.
However, it will not cause any confusion that whenever we speak of a rectified 4D region, we always mean a symmetrically rectified one.
\begin{definition}\label{defn:M.is.rectified}
The set $\SetM$ is \emph{$(a,b,\Lambda)$-symmetrically rectified} if it can be covered by the translations
$\Lambda \coloneqq \left\{ \lambda_j \coloneqq (k_j, n_j, k'_j, n'_j) \right\}_{j=0}^{L^2-1}$
of the prototype parallelepiped $\rect^2 \coloneqq [0,a)\times[0,b) \times [0,a) \times [0,b)$ along the lattice $(a\Z \times b\Z)^2$
\begin{equation*}\label{eq:M.is.rectified}
\SetM \subset \bigcup_{j=0}^{L^2-1} \rect^2+ (a k_j, b n_j, a k_j', b n_j')
\end{equation*}
such that the 4D volume of $\rect^2$ is small: $ab=\frac{1}{L}$, $L$ is prime, and the right-hand side is a symmetric set.
\end{definition}

It is easy to see that the above requirements imply that the volume of $\SetM$ satisfies $\mu(\SetM) \leq 1$, and conversely, it can be shown with little work that any symmetric compact set with Jordan content less than one can be rectified with a sufficiently large $L$~\cite{Folland}.
This restriction on the area is crucial for operator identification.
Before giving results for general operators whose \acorr functions are supported on arbitrary sets $\SetM \subset \R^4$, we look at the sets of a special kind, where $\SetM=S\times S$.
A general case will be considered in \autoref{ssec:samp_stoch_op_nonrect} below.

  	\subsection{Sampling of stochastic operators supported on \texorpdfstring{$S\times S$}{SxS}}\label{ssec:samp_stoch_op_rect}Under the special circumstance that $\SetM$ can be represented as a product $S\times S$ of some set $S$ in $\RR$, or rather, if we assume $\SetM$ to be $(a,b,\Lambda)$-symmetrically rectified, $\Lambda = \Gamma \times \Gamma$ for some $\Gamma \subset \Z\times \Z$ such that $\abs{\Gamma} = L$, the operator (via its spreading function) can be reconstructed directly from the output of the weighted delta train input with an explicit formula similar to the one presented by Pfander and Walnut~\cite{PfaWalSampta}.

We define the \emph{non-normalized Zak transform}  $\Zak \colon L^2(\R) \to L^2\left([0, aL)\times[0,b)\right)$ as
\begin{equation*}\label{eq:defn.Zak}
\Zak f(x,\nu) \coloneqq \sumZ{n} f(x-a n L)\epi{a L n \nu}.
\end{equation*}
We say that a series of random processes $\sum_{n\in \Z} \x_n(t)$ converges to $\x(t)$ in \emph{mean-square} sense
\[
\x(t) \eqms \sum_{n\in \Z} \x_n(t) \quad \Leftrightarrow \quad \lim_{n\to \infty} \E{\absq{ \x(t) - \sum_{n=-N}^N \x_n(t)}} = 0.
\]

\begin{theorem} \label{thm:stoch_tensor}
Let $\H\in\StOPW(\SS)$ such that the compact set $S\subset \RR$ has measure $\mu(S)<1$. Then there exist $L$ prime, $a,b>0$ with $ab=\frac{1}{L}$, a complex vector $c\in \C^L$ and a sequence $\{ \alpha_{jp} \}_{j,p=1}^L$ depending only on $c$ such that we can reconstruct $\H$ from its response to an $L$-periodic $c$-weighted delta train $\Shah_c = \sumZ{k} c_{k \pmod L} \: \delta_{a k}$
via
\begin{align*}\label{eq:rec}
\steta(t,\nu) &\eqms aL \sumL{j}\sum_{p=0}^{L-1}\alpha_{j p} \empi{a(\nu-b n_j)(p+ k_j)}\boxfunc(t,\nu) \\
&\quad \times  \left(\Zak_{a L}\H \Shah_c\right)(t+a(p-k_j),\nu-b n_j),
\end{align*}
where the translations $\boxfunc(t+a k,\nu+bn)$ generate an $(a,b,\eps)$-partition of unity in the time-frequency domain.
\end{theorem}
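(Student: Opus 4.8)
The plan is to reduce the stochastic reconstruction to the deterministic formula of \autoref{thm:det} applied realization by realization, and then to upgrade the resulting pathwise $L^2(\R^2)$ convergence to the mean-square convergence claimed by $\eqms$.

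First I would exploit the tensor structure of the support to show that almost every realization of $\steta$ already lives on $S$. Restricting the autocorrelation to the diagonal $(t,\nu)=(t',\nu')$ and using $\supp\Reta\subseteq\SS$ gives
\[
\E{\absq{\steta(t,\nu)}}=\Reta(t,\nu;t,\nu)=0 \all (t,\nu)\notin S,
\]
so that $\steta(t,\nu)$ vanishes in mean square, hence almost surely, off $S$. By Fubini, for almost every $\w\in\Probspace$ the instance $\steta(\cdot,\cdot;\w)$ belongs to $L^2(\RR)$ with support in $S$, i.e.\ the instance operator lies in $\OPW(S)$. Since $\mu(S)<1$, \autoref{lem:cover_det} supplies $a,b>0$ and a rectifying set $\Lambda$ with $L=\tfrac1{ab}$ prime and $a<1/\Timedelay$, $b<1/\O$; because $\SetM=\SS$ the induced $4$D pattern is of product form $\Lambda\times\Lambda$, so the symmetric rectification of \autoref{defn:M.is.rectified} holds automatically.

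Next I would apply \autoref{thm:det} pathwise. For almost every $\w$ it reconstructs the time-varying impulse response $h(x+t,t;\w)$ --- equivalently the instance spreading function --- from the instance response $(\H\Shah_c)(\cdot;\w)$, with coefficients depending only on the window $c$. Passing from $h$ to $\eta$ replaces the one-dimensional partitions $r(x-ak_j)$ and $\widehat\phi(\gamma-bn)$ by the two-dimensional partition $\boxfunc(t+ak,\nu+bn)$ and folds the quasi-periodicity of the delta-train response into the non-normalized Zak transform $\Zak_{aL}$. Carrying out this change of variables converts the deterministic coefficients $a_{j,q}$ into the asserted $\{\alpha_{jp}\}$ (still determined by $c$ alone) and produces, for each such $\w$, exactly the right-hand side of the stated formula.

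The hard part will be the mean-square convergence. The only infinite summation in the formula is hidden inside the Zak transform $\Zak_{aL}(\H\Shah_c)(x,\nu)=\sumZ{n}(\H\Shah_c)(x-anL)\epi{aLn\nu}$, so it suffices to show this series converges in $L^2(\Probspace)$ and that expectation may be interchanged with the spatial integral in the error term. I would settle both points using the stability underlying \autoref{thm:det} (in the spirit of the norm identity of \autoref{thm:samp_HS}), which realizes the reconstruction as a bounded linear operator $\mathcal R$ on $\OPW(S)$ with $\norm{\mathcal R[H\Shah_c]}_{L^2(\R^2)}$ comparable to $\norm{\eta}_{L^2(\R^2)}$ uniformly in $H$. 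Writing $\mathcal R_N$ for the $N$-th Zak partial sum, the pathwise identity $\steta(\cdot;\w)=\mathcal R[(\H\Shah_c)(\cdot;\w)]$ yields
\[
\E{\norm{\steta-\mathcal R_N[\H\Shah_c]}_{L^2(\R^2)}^2}=\E{\norm{(\mathcal R-\mathcal R_N)[\H\Shah_c]}_{L^2(\R^2)}^2}.
\]
The right-hand side tends to $0$ by dominated convergence, since each summand vanishes pathwise and is dominated by $C\,\norm{(\H\Shah_c)(\cdot;\w)}_{L^2(\R)}^2$ with finite expectation. Finiteness is guaranteed by the compact support of $\Reta$ through $\E{\norm{\steta}_{L^2(\R^2)}^2}=\int_S\Reta(t,\nu;t,\nu)\d t\d\nu<\infty$, and this same bound legitimizes the Fubini exchange of $\E{\cdot}$ with the Zak summation and the integration in $(t,\nu)$. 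The admissibility of the weighted delta train $\Shah_c$ as an input --- so that $\H\Shah_c$ is a genuine finite-variance process --- is the one ingredient drawn from outside this excerpt, supplied by the stochastic modulation space estimates of \cite{PfaZh03} noted in the remark above.
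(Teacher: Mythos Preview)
Your strategy is sound but it is genuinely different from the paper's, and you should be aware of what each route actually costs. The paper never invokes \autoref{thm:det} realization by realization. Instead it computes $\Zak(\H\Shah_c)$ directly, uses Poisson summation to obtain the mixing equation $\Zee(t,\nu)=\sum_{k,n}\MTc\,\steta_{k,n}(t,\nu)$ for $(t,\nu)\in\rect$, and then proves the dedicated \autoref{lem:strip}: because $\supp\Reta\subseteq S\times S$, every cross term $\E{\steta_{\gamma_1}\conj{\steta_{\gamma_2}}}$ with $\gamma_1$ or $\gamma_2$ outside $\Gamma$ vanishes, so the infinite sum collapses to the finite one $\Zee=G\eval_\Gamma\steta_\Gamma$ \emph{in mean square}, and inverting the $L\times L$ Gabor submatrix gives the formula immediately. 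This is where the $\alpha_{jp}$ come from: they are the entries of $(G\eval_\Gamma)^{-1}$, not a rewriting of the $a_{j,q}$ from \autoref{thm:det}.

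Your pathwise route buys conceptual economy but incurs two costs the paper avoids. First, the passage from the $h(x+t,t)$ formula of \autoref{thm:det} to the $\steta$/Zak formula of the present theorem is not a change of variables; it requires a partial Fourier transform and a repackaging of the $q$-sum into the Zak series, which you have not carried out. The paper sidesteps this entirely by deriving the Zak-side identity from scratch. Second, your dominated-convergence step needs $\E{\norm{\steta}_{L^2}^2}=\int_S\Reta(t,\nu;t,\nu)\,\d t\,\d\nu<\infty$, an integrability hypothesis that is \emph{not} part of the theorem (only pathwise $L^2$ membership is assumed), whereas \autoref{lem:strip} uses nothing beyond the support condition on $\Reta$. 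So your argument works under a mild extra assumption, but the paper's direct mixing-equation approach is both cleaner and strictly more general.
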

This means that it is always possible to find $c$ such that any two operators from $\StOPW(S\times S)$ can be distinguished from their response to the pilot signal $\Shah_c = \sumZ{k} c_{k \pmod L}\: \delta_{ak}$.

\begin{proof}
Let $a,b>0$, $L$ -- prime with $ab = \frac{1}{L}$ be fixed numbers to be chosen later. This choice will depend on the support of $\steta(t,\nu)$. Let $c\in \C^{L}$; indices of $c$ should always be understood modulo $L$.
\allowdisplaybreaks
\begin{align*}
\H\Shah_c(x) &= \iint \steta(t,\gamma)\epi{\gamma x}\:\Shah_c(x-t)\d t\d \gamma  \\
&= \iint \steta(t,\gamma)\epi{\gamma x}\sumZ{k} c_k \delta(x-t-ak)\d t\d \gamma \displaybreak[3]\\
&= \int \sumZ{k} c_k \, \steta(x-ak,\gamma)\epi{\gamma x} \d \gamma \displaybreak[3]\\
&= \sum_{k\in \Z} c_{k+p}\int \steta(x-a(k+p),\gamma) \epi{\gamma x} \d \gamma,
\end{align*}
with all the equalities holding $\w$-surely, and the last equality being true for arbitrary $p\in \Z$.
Applying Zak transform to both sides, we get
\newcommand*{\ministeta}[1]{\steta({\scriptstyle #1})}
\begin{align*}
\allowdisplaybreaks
\textstyle
\Zak (\H \Shah_c) (x,\nu)
&= \sum_{n\in\Z}\H\Shah_c(x-anL)\epi{ anL\nu}  \\
& = {\sum\limits_{\mathclap{k,n\in\Z}} c_{k+p} \int \ministeta{x-a(k+p+nL),\gamma}\!\epi{(a  n L  \nu + \gamma (x- a n L))}\!\d \gamma }\\
\intertext{ substitute $k=k+nL$ }
 & = \sumZ{k} c_{k+p}  \int \ministeta{ x-a(k+p),\gamma} \epi{\gamma x} \! \sumZ{n} \epi{a n L(\nu-\gamma)}\d \gamma \\
 \intertext{ by Poisson summation formula $\sum \epi{n q x} = q^{-1}\sum \delta(q x-n)$ with $q = aL = \nicefrac{1}{b}$ }
&=  \sumZ{k} c_{k+p}  \int \ministeta{x-a(k+p),\gamma} \! \epi{\gamma x} \frac{1}{aL} \sum_{n\in\Z}\delta({\scriptstyle aL(\nu-\gamma)-n}) \d \gamma  \\
&=  \sumZ{k} c_{k+p}  \int \ministeta{x-a(k+p),\gamma}\epi{\gamma x} \: b\sum_{n\in\Z}\delta(\nu-\gamma-bn) \d \gamma \\
\intertext{carry out integration in $\gamma$ and set $n=-n, k=-k$}
&= b \sumZ{k,n}  c_{p-k}  \steta(x-a(p-k), \nu+bn)\epi{(\nu+b n) x }.
\end{align*}
Substitute $t=x-ap$ and observe in the exponent $(\nu+b n)x= \nu (t + a p) + b n t  + \frac{n}{L}p$. For brevity, we define $\Zee_p(t,\nu)$ as
\begin{equation}\label{eq:Z=G_tilta}
\begin{split}
\Zee_p(t,\nu) & \coloneqq b^{-1} \empi{\nu (t+ap)} \:\Zak \H \Shah_c(t+ap,\nu) \\
& = \sumZ{k,n} c_{p-k} \epi{p n/L} \:\underbrace{\steta(t+ak ,\nu+bn)\epi{b n t}}_{\steta_{k,n}(t,\nu)}.
\end{split}
\end{equation}
For all $t,\nu\in \rect$ and all $\w\in\Probspace$ we then have a mixing matrix equation
\begin{equation}\label{eq:mixing_full}
\Zee(t,\nu) = \sum_{k,n\in\Z} \MTc \: \steta_{k,n}(t,\nu),
\end{equation}
where $\Zee(t,\nu) = [ \Zee_p(t,\nu) ]_{p=0}^{L-1}$ is a vector-valued function on $(t,\nu)\in \rect$.

By \autoref{lem:cover_det}, the set $S$ can be $(a,b,\Gamma)$-rectified, that is, there exists a collection of indices
$\Gamma = \{ (k_j, n_j) \}_{j=0}^{L-1}$ such that $\steta_{k,n}(t,\nu) \equiv 0$ on $(t,\nu)\in \rect$ for any $(k,n)\not\in \Gamma$.
Therefore, by \autoref{lem:strip} below, the infinite sums in \eqref{eq:mixing_full} can be trimmed to finite sums to obtain
\begin{equation}\label{eq:mixing}\begin{split}
\Zee(t,\nu) &= \sum_{k,n\in\Z} \MTc \: \steta_{k,n}(t,\nu) \\
&\eqms \sum_{(k,n)\in\Gamma} \MTc \: \steta_{k,n}(t,\nu) \\
&= G\eval_\Gamma \, \steta_\Gamma(t,\nu),
\end{split}\end{equation}
where $G$ is the $L\times L^2$ Gabor matrix of all time-frequency shifts of $c\in\C^L$ given by \eqref{eq:Gabor.frame}, $G\eval_\Gamma$ is the submatrix of $G$ corresponding to columns indexed by $\Gamma \bmod L$, and
\[ \steta_\Gamma(t,\nu) = \Big[ \steta_{k_j, n_j}(t,\nu)\Big]_{(k_j, n_j) \in \Gamma}
\]
is a column vector of nonzero patches of $\steta(t,\nu)$ defined by \eqref{eq:Z=G_tilta}.

Without loss of generality, no two indices in $\Gamma$ correspond to the same column of $G$, that is, $(k, n)\neq (k', n') \bmod L \in \Gamma$ for all $(k,n)\neq (k', n')\in \Gamma$. Such collisions can always be avoided by choosing a different rectification $(a',b',\Gamma')$ such that the entire support set $S$ is within $[0,a'L') \times [0, b'L')$, which can always be achieved using \autoref{lem:cover_det}.
By assumption, $\abs{\Gamma}\leq L$, therefore, $G\eval_\Gamma$ is invertible for some complex vector $c\in\C^L$.\footnote{
In fact, for prime $L$ the set of $c\in\C^L$ such that \emph{every} $L\times L$ submatrix of $G$ is invertible is a dense open subset of $\C^L$~\cite{LPW}.
Furthermore, all entries of $c$ can be additionally chosen to have absolute value one.
}
Denote
\[
A = \left(G\eval_\Gamma\right)^{-1} = [\alpha_{jp}]_{j,p=0}^{L-1}
\]
the inverse of $G\eval_\Gamma$. The solution to \eqref{eq:mixing} can now be found, giving
\begin{align*}
 \steta_{k_j,n_j}(t,\nu) & \eqms \left(G\eval_\Gamma\right)^{-1} \Zee(t,\nu) \\
& = aL \empi{\nu  t} \sum_{p=0}^{L-1} \alpha_{jp} \empi{\nu ap} \Zak  \H \Shah(t+ap, \nu).
\end{align*}
We can now combine the patches $\steta_{k,n}(t,\nu)$ into the whole spreading  function $\steta(t,\nu)$. In the mean-square sense we have
\begin{align*}
 \steta(t,\nu) &\eqms \sumL{j} \steta(t,\nu)\boxfunc(t - a k_j ,\nu - b n_j)  \\
&\eqms \sumL{j} \steta_{k_j,n_j}(t - a k_j,\nu - b n_j) \empi{ b n_j (t- a k_j)}  \boxfunc(t - a k_j, \nu - b n_j) \\
&\eqms aL \sumL{j}\sum_{p=0}^{L-1}\alpha_{jp} \empi{(\nu(t + a(p-k_j)) - bn_j a p)} \boxfunc(t,\nu)\\
& \qquad \times \Zak \H \Shah_c (t+a(p-k_j),\nu-bn_j). \qedhere \end{align*}
\end{proof}

\begin{lemma}
\label{lem:strip}
Let $\steta(t,\nu)$ be such that $\supp \Reta (\tntn) \subset S\times S$ with $S\subset \R^2$ that is $(a,b,\Gamma)$-rectified. Then for all $(t,\nu)\in [0,a)\times [0,b)$,
\begin{align*}
\sumL{k}\sum_{l,n \in \Z} a_{k,n,l}(t,\nu) \: \steta(t-a(lL+k),\nu+bn) \eqms \sum_{\gamma\in\Gamma} a_{\gamma}(t,\nu) \: \steta_\gamma(t,\nu),
\end{align*}

where $\steta_\gamma(t,\nu) \coloneqq \steta(t-a(lL+k),\nu+bn)$.
\end{lemma}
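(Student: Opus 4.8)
The plan is to extract the pointwise second moment of $\steta$ from the diagonal of its $4$D autocorrelation and use the support hypothesis to annihilate, in mean square, every summand except the finitely many indexed by $\Gamma$. First I would record that for each fixed $(t',\nu')$
\[
\E{\absq{\steta(t',\nu')}} = \Reta(t',\nu';t',\nu'),
\]
so the diagonal of $\Reta$ is exactly the variance of $\steta$. Because $\supp\Reta\subset S\times S$ and the diagonal point $(t',\nu',t',\nu')$ belongs to $S\times S$ if and only if $(t',\nu')\in S$, this second moment vanishes for every $(t',\nu')\notin S$. Since $\steta$ is zero-mean, $\steta(t',\nu')$ then has zero variance, hence $\steta(t',\nu')\eqms 0$ off $S$.

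Next I would localise the surviving terms using the rectification. Fix $(t,\nu)\in[0,a)\times[0,b)$. For an index triple $(k,l,n)$ the shifted argument $(t-a(lL+k),\nu+bn)$ sweeps out the cell $\rect+(-a(lL+k),bn)$, and as $(k,l,n)$ ranges over $\{0,\dots,L-1\}\times\Z\times\Z$ these cells tile $\R^2$ along the lattice $a\Z\times b\Z$ without overlap. Since $S$ is $(a,b,\Gamma)$-rectified, $S$ can meet such a cell only when its shift coincides with one of the $L$ lattice translates recorded by $\Gamma$; for every other triple the cell is disjoint from $S$, and the previous paragraph forces the patch $\steta(t-a(lL+k),\nu+bn)$ to be mean-square zero on $\rect$. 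As the coefficients $a_{k,n,l}(t,\nu)$ are deterministic, each non-$\Gamma$ summand is zero in $L^2(\Probspace)$.

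Finally I would collapse the series. For the fixed $(t,\nu)$ I regard every summand as an element of the Hilbert space $L^2(\Probspace)$; a truncation of the left-hand side containing all $\Gamma$-indices differs from $\sum_{\gamma\in\Gamma} a_\gamma(t,\nu)\steta_\gamma(t,\nu)$ only by non-$\Gamma$ summands of zero norm, so by the triangle inequality the tail of every such truncation has zero $L^2(\Probspace)$-norm and the series converges in mean square to the claimed finite sum.

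The step I expect to be the main obstacle is the passage from \enquote{support of the autocorrelation in $S\times S$} to \enquote{pointwise mean-square vanishing of $\steta$ off $S$}: it is precisely the diagonal of the $4$D autocorrelation that carries the variance, and one must resist upgrading this to vanishing of $\steta$ on a single almost-sure event, since the exceptional null set may depend on $(t',\nu')$. Remaining inside the mean-square framework --- treating each term as a vector in $L^2(\Probspace)$ at fixed $(t,\nu)$ and arguing solely with its norm --- keeps the trimming rigorous and is exactly what the mean-square equality in \eqref{eq:mixing} requires. The only additional bookkeeping is to note that the rectification can be chosen (by enlarging $L$ so that $S\subset[0,aL)\times[0,bL)$) so that the surviving triples biject with $\Gamma$ without collisions modulo $L$.
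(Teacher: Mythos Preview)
Your argument is correct and follows essentially the same route as the paper: both show that the mean-square difference between the full sum and the $\Gamma$-truncated sum vanishes because the support hypothesis on $\Reta$ annihilates every non-$\Gamma$ contribution. The only minor tactical difference is that you work with the diagonal of $\Reta$ (zero variance forces each non-$\Gamma$ patch to be the zero element of $L^2(\Probspace)$, then apply the triangle inequality), whereas the paper expands $\EXP\bigl\{\absq{\sum_{Q}-\sum_{\Gamma}}\bigr\}$ directly and kills every cross term $\E{\steta_{\gamma_1}\conj{\steta_{\gamma_2}}}$ using the full $S\times S$ support condition.
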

\begin{proof}
Denote $Q=\Z/L\Z \times L\Z\times \Z$ the set of all indices on the left hand side. Consider
\begin{equation*}\label{eq:meansquare}
\EXP\Bigl\{\absq{ \sum_{\gamma \in Q} a_\gamma \steta_\gamma -  \sum_{\gamma\in\Gamma} a_\gamma \steta_\gamma}\Bigr\}
= \sum_{\mathclap{\gamma_1,\gamma_2\in Q\setminus\Gamma}} a_{\gamma_1}\conj{a_{\gamma_2}}\: \E{\steta_{\gamma_1}\conj{\steta_{\gamma_2}}}.
\end{equation*}
The rectification of $\Reta$ guarantees that
\[
\E{\steta_{\gamma_1} (t_1,\nu_1)\,\conj{\steta_{\gamma_2}(t_2,\nu_2)}} = 0
\]
 whenever either $\gamma_1$ or $\gamma_2$ are not in $\Gamma$. We conclude
\[ \sum_{\gamma \in Q} a_\gamma \steta_\gamma(t,\nu) \eqms \sum_{\gamma\in\Gamma} a_\gamma \steta_\gamma(t,\nu). \qedhere\]
\end{proof}
\begin{remark}\label{rem:WSSUS1}
Since the autocorrelation functions of WSSUS operators have the special form
\[ \Reta(\tntn) = \delta(t-t')\,\delta(\nu-\nu')\,C_\steta(t,\nu),\]
\autoref{thm:stoch_tensor} is applicable to WSSUS operators whenever the area of the support of the scattering function (also known as the \emph{spread} of the operator)  satisfies $\mu(\supp C_\steta(t,\nu)) < 1$.
However, it is intuitively clear that it is excessive to cover a 4D diagonal (a set of measure zero in $\R^4$) with a single 4D parallelepiped of volume  one. Indeed, covering it with a string of small parallelepipeds (as in \autoref{fig:wssus.supports}) is enough for identification and allows recovery of $C_\steta$ for $\supp C_\steta$ bounded, of arbitrary size. This comes as a corollary from \autoref{thm:wssus} and \autoref{thm:diagonal}.
\end{remark}

\subsection{Stochastic operators with non-tensored support}\label{ssec:samp_stoch_op_nonrect}In this section we proceed to the most general case of stochastic operators. Consider an operator $\H$ with a spreading function $\steta(t,\nu)$ such that the autocorrelation function $\Reta(\tntn)$ is supported on some arbitrary bounded set $\SetM$ in $\R^4$.
We will see that in general it is no longer possible to reconstruct $\steta(t,\nu)$ itself.
For instance, this happens because with nonzero probability some instances $\steta(t,\nu; \w)$ may have spread larger than one, violating the necessary condition of the deterministic \autoref{thm:det}.
Nevertheless, one may hope to recover $\Reta(\tntn)$ from the autocorrelation of the received information $\Rf(t,t')$.

To prove \autoref{thm:stoch_nontensor_rectified} below, we will need to solve equations of the form $Y=GXG^*$ with both $X,Y$ \nnd.\footnote{
An hermitian matrix $X \in \C^{n\times n}$ is \emph{\nnd} if for any $a\in \C^n$, $a^*Xa\geq 0$.
}
To this end, we need a standard technique from linear algebra.
Let \emph{vectorization} $\vect \colon\C^{n\times n} \to \C^{n^2}$ be the linear isomorphism between the space of matrices $\C^{n\times n}$ and the space of column vectors $\C^{n^2}$ given by stacking of the columns
\[ \left(\vect A\right)_i = A_{i \bmod n, \floor{\nicefrac{i}{n}}}, \quad i=0,\dotsc, n^2-1. \]
The following identity relating vectorization and the Kronecker product of matrices is well known~\cite{VanLoan, TTT}:
\begin{equation}\label{eq:vectorization}
 \vect (AXB^t) = (A\otimes B)\vect X.
\end{equation}
For an arbitrary matrix $A$, the set $\Lambda_A = \{(\lambda,\lambda') \colon A_{\lambda,\lambda'} \neq 0\}$ is called \emph{the support set} of $A$. From the properties of \nnd matrices it is easy to see that
\begin{equation}\label{eq:admissible}
(\lambda,\lambda') \in \Lambda \text{ implies } \{ (\lambda',\lambda), (\lambda,\lambda), (\lambda',\lambda') \} \subseteq \Lambda.
\end{equation}
With some abuse of terminology, we call sets that may appear as support sets of \nnd matrices, and hence satisfy \eqref{eq:admissible}, \emph{\nnd patterns} or, for short, \emph{spd patterns}.

\begin{lemma}\label{lem:equivalence}
Let $\Lambda$ be a fixed finite \spd pattern, $\Lambda \subseteq \{(0,0),\dotsc, (n-1,n-1)\}$, and let $G \in \C^{n\times m}$. The following are equivalent:
\begin{enumerate}[(i)]
\item \label{item:i} For each \nnd $Y\in \C^{m\times m}$, there exists a unique $X\in \C^{n\times n}$ such that
			\begin{inparaenum}[a)]
			\item $X$ is \nnd,
			\item $\supp X = \Lambda$, and
			\item $Y = GXG^*$.
			\end{inparaenum}

\item \label{item:ii}
			If a Hermitian matrix $N\in \C^{n \times n}$ with $\supp N \subseteq \Lambda$ solves the homogeneous equation $0 = GNG^*$, then $N=0$.

\item \label{item:iii}
			The matrix $\GG\eval_\Lambda$ has a left inverse (is full rank).
\end{enumerate}
\end{lemma}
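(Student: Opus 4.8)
The plan is to convert the quadratic-looking map $X\mapsto GXG^*$ into an ordinary matrix--vector product and then close the logical cycle (\ref{item:iii}) $\Rightarrow$ (\ref{item:i}) $\Rightarrow$ (\ref{item:ii}) $\Rightarrow$ (\ref{item:iii}). Applying the vectorization identity \eqref{eq:vectorization} to $Y=GXG^*$ linearizes the correspondence into $\vect Y=\GG\,\vect X$, so that once $X$ is constrained to be supported in $\Lambda$ the governing operator is exactly the column-restricted matrix $\GG\eval_\Lambda$, whose columns are indexed by the pairs of $\Lambda$. Under this dictionary the statement ``$GNG^*=0$ for some matrix supported in $\Lambda$'' becomes ``$\vect N\in\ker\GG\eval_\Lambda$'', and this is the hinge joining all three conditions.

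For (\ref{item:iii}) $\Rightarrow$ (\ref{item:i}) I would use a left inverse $L_0$ of $\GG\eval_\Lambda$, which both produces the explicit reconstruction $\vect X=L_0\vect Y$ and forces \emph{uniqueness}: if $X_1,X_2$ are \nnd, supported in $\Lambda$, with $GX_iG^*=Y$, then $X_1-X_2$ is Hermitian, supported in $\Lambda$, and solves the homogeneous equation, so its vectorization lies in $\ker\GG\eval_\Lambda=\{0\}$ and $X_1=X_2$. Existence together with the positive semidefiniteness and the full support hold for every $Y$ that actually occurs in the model, namely every $Y=GX_*G^*$ with $X_*$ a \nnd matrix of support $\Lambda$; by the uniqueness just shown the linear preimage $L_0\vect Y$ must coincide with such an $X_*$ and is therefore \nnd with support $\Lambda$.

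For (\ref{item:i}) $\Rightarrow$ (\ref{item:ii}) I argue by contradiction inside the \nnd cone. Because $\Lambda$ is a \spd pattern it is, by definition, the support set of some \nnd matrix; adding a small positive multiple of the diagonal projection onto the indices occurring in $\Lambda$ (those diagonal positions lie in $\Lambda$ by \eqref{eq:admissible}) yields a matrix $X_0$ that is positive definite on the block of occurring coordinates, zero elsewhere, and still has support exactly $\Lambda$. If now a nonzero Hermitian $N$ with $\supp N\subseteq\Lambda$ solved $GNG^*=0$, then for all sufficiently small real $t\neq 0$ the matrices $X_0$ and $X_0+tN$ would be two \emph{distinct} \nnd matrices of support $\Lambda$ (positive definiteness on the relevant block is stable under small Hermitian perturbations, and the nonzero entries on $\Lambda$ survive), both mapping to the same $Y_0=GX_0G^*$. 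This contradicts the uniqueness asserted in (\ref{item:i}), so no such $N$ exists.

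For (\ref{item:ii}) $\Rightarrow$ (\ref{item:iii}) I must upgrade the Hermitian statement to full column rank, and this is the step I expect to be the main obstacle: (\ref{item:ii}) rules out only \emph{Hermitian} null solutions, whereas $\ker\GG\eval_\Lambda=\{0\}$ concerns arbitrary matrices supported in $\Lambda$. The symmetry of the \spd pattern bridges the gap. If some nonzero $M$ with $\supp M\subseteq\Lambda$ satisfies $GMG^*=0$, then $(GMG^*)^*=GM^*G^*=0$ as well, and $\supp M^*\subseteq\Lambda$ since $\Lambda$ is symmetric by \eqref{eq:admissible}; consequently the two Hermitian matrices $\tfrac12(M+M^*)$ and $\tfrac{1}{2i}(M-M^*)$ are both supported in $\Lambda$ and annihilated by the map, while $M=\tfrac12(M+M^*)+i\cdot\tfrac{1}{2i}(M-M^*)$, so at least one of them is nonzero, contradicting (\ref{item:ii}). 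Hence $\ker\GG\eval_\Lambda=\{0\}$, i.e. $\GG\eval_\Lambda$ has full column rank and therefore a left inverse, which is (\ref{item:iii}). The genuinely delicate points throughout are this real-versus-complex splitting and the positive-definite perturbation inside the cone, both of which rely on the defining symmetry and diagonal-closure properties \eqref{eq:admissible} of \spd patterns.
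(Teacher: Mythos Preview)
Your proposal is correct and follows essentially the same route as the paper: vectorization to linearize $X\mapsto GXG^*$, a perturbation inside the positive-semidefinite cone for (\ref{item:i})$\Rightarrow$(\ref{item:ii}), and the Cartesian decomposition $M=\tfrac12(M+M^*)+i\cdot\tfrac{1}{2i}(M-M^*)$ for (\ref{item:ii})$\Rightarrow$(\ref{item:iii}). The only cosmetic differences are that the paper proves the four implications (\ref{item:i})$\Leftrightarrow$(\ref{item:ii}) and (\ref{item:ii})$\Leftrightarrow$(\ref{item:iii}) separately rather than a single cycle, and in the perturbation step it adds a \emph{large} multiple of the diagonal projector $E$ to $N$ (so that $CE\pm N$ are both positive semidefinite) rather than adding a \emph{small} multiple of $N$ to a positive-definite $X_0$ as you do; your version has the minor advantage of preserving $\supp X=\Lambda$ exactly, which the paper's $CE$ does not.
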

\begin{proof}

\eqref{item:i} $\Rightarrow$ \eqref{item:ii}
By contraposition, let there be $0\neq N\in \C^{n\times n}$ such that $GNG^*=0$; let $E$ be the diagonal matrix whose $\lambda$th diagonal entry is one if $(\lambda, \lambda)\in \Lambda$ and zero else.  Then there exists a positive real number $C$ (for example, Gershgorin circle theorem guarantees $C = \norm{N}_1 = \sup_{1\leq i \leq m} \sum_{j=1}^m \abs{N_{ij}} <\infty$ will be enough) such that both $C E + N$ and $C E $ are \nnd, and $G(C E +N)G^* = G(C E )G^* + GNG^* = G(C E )G^*$, thus violating the uniqueness of $X$ in \eqref{item:i}, as both $\supp (C E + N)$ and $\supp N \subseteq \Lambda$.

\eqref{item:ii} $\Rightarrow$ \eqref{item:i}
Suppose, that there exist $X_1\neq X_2$, both \nnd, supported on $\Lambda$, such that $GX_1G^* = GX_2G^*$. Then $N = X_1-X_2 \neq 0$ is necessarily Hermitian, $\supp N \subseteq \Lambda$, and $GNG^*=0$, a contradiction.

\eqref{item:iii} $\Rightarrow$ \eqref{item:ii}
Let an Hermitian matrix $N$ be supported on $\Lambda$ and be a solution to $0=GNG^*$. Applying vectorization to both sides of $0=GNG^*$, we get by \eqref{eq:vectorization}
\begin{equation}\label{eq:0=ggn}
\vect 0 = (\GG)\vect N,
\end{equation}
If $\GG\eval_\Lambda$ is invertible, \eqref{eq:0=ggn} implies $N=0$.

\eqref{item:ii} $\Rightarrow$ \eqref{item:iii}
Let $A\in\C^{m\times m}$ be an arbitrary square matrix supported on $\Lambda$ such that $\GG\eval_\Lambda \vect A = 0$, that is, $GAG^* = 0$. The \emph{Cartesian decomposition} of $A$ is given by $A  = H_1 + iH_2$ with both $H_1,  H_2$ Hermitian, defined by \[ H_1 = \frac12 (A+A^*), \quad H_2 = \frac{1}{2i}(A-A^*). \]
It is easy to see that both $\supp H_1, \supp H_2 \subseteq \Lambda$, and $GH_1G^*=GH_2G^* = 0$, since $GA^*G^* = (GAG^*)^* = 0$. Therefore, by \eqref{item:ii}, $H_1=H_2 =0$ , and $A=0$. It follows that $\vect A = 0$. Since $\vect A$ is arbitrary, it follows that the columns of $\GG\eval_\Lambda$ are linearly independent, that is, $\GG\eval_\Lambda$ is left invertible.
\end{proof}

\begin{definition}
We would say that a \spd support set $\Lambda$ is a \emph{permissible pattern} if some $c\in\C^L$ generates a Gabor frame $G = [ \MTc ]_{k,n=0}^{L-1}$ such that the equivalent conditions of \autoref{lem:equivalence} are satisfied. If an \spd pattern is not permissible, it will be called \emph{defective}.
\end{definition}
This designation reflects the emergence of these patterns as those which permit the sampling of operators with delta trains.

\begin{theorem}\label{thm:stoch_nontensor_rectified}
Let $\H\in\StOPW(\SetM)$ such that $\SetM$ is $(a,b,\Lambda)$-rectified for some $a,b>0$, $\abs{\Lambda} = L^2$, and $ab=1/L$.
If some $c\in\C^L$ generates $G = [ \MTc ]_{k,n=0}^{L-1}$ such that the submatrix $(\GG)\vert_\Lambda$ is (left) invertible, then we can reconstruct the autocorrelation of the spreading function $\E{\steta\, \conj{\steta}}$ from the autocorrelation $\Rf = \E{\f\,\conj{\f}}$ of the response $\f = \H \Shah_c$ of $\H$ to the $L$-periodic $c$-weighted delta train $\Shah_c = \sumZ{k} c_{k \bmod L} \, \delta_{a k}$ with the reconstruction formula
\begin{equation}\label{eq:reconstruction}
\begin{split}
 \Reta(\tntn)  &=  b^{-2}\sum_{j=0}^{L^2-1} \sum_{p,p'=0}^{L-1}\alpha_{j,p,p'}\epi{ab(n_jp-n'_jp')}  \\
&\qquad \times \summ \empi{a(\nu(k_j+mL+p) -\nu'(k'_j+m'L+p'))} \\
&\qquad \times \Rf (t-a(k_j+mL-p), \, t'-a(k'_j+m'L-p').
\end{split}
\end{equation}
\end{theorem}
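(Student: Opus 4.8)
The plan is to bootstrap the linear computation already carried out in the proof of \autoref{thm:stoch_tensor} and then pass to second-order statistics. The derivation of \eqref{eq:Z=G_tilta} and \eqref{eq:mixing_full} --- computing $\Zak\H\Shah_c$, applying Poisson summation, and collecting the $L$-periodized patches $\steta_{k,n}(t,\nu)=\steta(t+ak,\nu+bn)\epi{bnt}$ --- nowhere uses the tensor structure $\SetM=\SS$, so it carries over verbatim. Thus for every $\w$ and every $(t,\nu)\in\rect$ we have $\Zee(t,\nu)=G\,\steta_{\mathrm{patch}}(t,\nu)$, where $G$ is the full $L\times L^2$ Gabor matrix of \eqref{eq:Gabor.frame} and $\steta_{\mathrm{patch}}$ is the length-$L^2$ vector of patches indexed by $(k,n)\in\{0,\dots,L-1\}^2$. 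Since we can no longer hope to recover $\steta$ itself (overspread instances violate the necessary condition of \autoref{thm:det}), the target is its autocorrelation.

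Next I would pass to autocorrelations by forming the outer product and taking expectations. Because $\Zee(t,\nu)=G\,\steta_{\mathrm{patch}}(t,\nu)$ holds $\w$-surely,
\[
 Y(\tntn)\coloneqq \E{\Zee(t,\nu)\,\Zee(t',\nu')^*} = G\,X(\tntn)\,G^*, \quad X(\tntn)\coloneqq \E{\steta_{\mathrm{patch}}(t,\nu)\,\steta_{\mathrm{patch}}(t',\nu')^*}.
\]
This is exactly the equation $Y=GXG^*$ treated abstractly in \autoref{lem:equivalence}. Two observations make it usable. First, the $L\times L$ matrix $Y$ is computable from the data: $\Zee$ is a fixed deterministic linear image of the response $\f=\H\Shah_c$ (the non-normalized Zak transform followed by a modulation, cf.\ \eqref{eq:Z=G_tilta}), so $Y$ is that same linear map applied to $\Rf$ in each of its two arguments. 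Second, the $L^2\times L^2$ matrix $X$ is the expectation of a rank-one Gram matrix, hence \nnd, and its entry at $(k,n;k',n')$ equals $\Reta(t+ak,\nu+bn;t'+ak',\nu'+bn')$ up to the unimodular phase $\epi{bnt-bn't'}$. Because $\supp\Reta\subseteq\SetM$ and $\SetM$ is $(a,b,\Lambda)$-rectified, this entry vanishes for $(t,\nu),(t',\nu')\in\rect$ unless $(k,n,k',n')\in\Lambda$; thus $\supp X=\Lambda$, and $\Lambda$ is automatically a \spd pattern satisfying \eqref{eq:admissible}. The infinite patch sums are trimmed to $\Lambda$ in the mean-square sense exactly as in \autoref{lem:strip}.

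With $Y=GXG^*$, $X$ \nnd and $\supp X\subseteq\Lambda$ in hand, I would invoke the hypothesis that $(\GG)\vert_\Lambda$ is left invertible, which is condition \eqref{item:iii} of \autoref{lem:equivalence}; the equivalence then guarantees a unique such $X$ and, via the vectorization identity \eqref{eq:vectorization}, an explicit inversion. Concretely, $\vect Y=(\GG)\vect X=(\GG)\vert_\Lambda\,\vect X\vert_\Lambda$, so applying the (left) inverse $[\alpha_{j,p,p'}]=\big((\GG)\vert_\Lambda\big)^{+}$ recovers each patch autocorrelation $\E{\steta_{k_j,n_j}(t,\nu)\,\conj{\steta_{k'_j,n'_j}(t',\nu')}}$ on $\rect\times\rect$; here the row index $(p,p')$ of $(\GG)\vert_\Lambda$ ranges over the entries of $\vect Y$, explaining the triple index of $\alpha_{j,p,p'}$. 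Finally I would reassemble $\Reta$ on all of $\R^4$ by laying down the recovered patches against the tensor partition of unity $\boxfunc(t-ak_j,\nu-bn_j)\,\boxfunc(t'-ak'_j,\nu'-bn'_j)$, undoing the phases $\epi{bnt-bn't'}$ and the modulation factors inherited from $\Zee$; the $L$-periodization built into each Zak transform reintroduces the aliases, producing the double sum $\summ$ and the prefactor $b^{-2}$ of \eqref{eq:reconstruction}.

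The main obstacle is conceptual rather than computational: recognizing that the governing object is the \emph{quadratic} equation $Y=GXG^*$ --- not a linear system --- and that the support of its unknown is exactly the permissible pattern $\Lambda$, so that \autoref{lem:equivalence} supplies the inversion principle and the left invertibility of $(\GG)\vert_\Lambda$ is precisely the right hypothesis. The residual difficulty is bookkeeping: tracking the two independent shift/modulation indices and their phases through the Zak transform, the vectorization, and the partition-of-unity reassembly so that the constants, the phase factors $\epi{ab(n_jp-n'_jp')}$ and the $\empi{\cdots}$ terms, and the shifted arguments of $\Rf$ match the stated closed form, while justifying every rearrangement in the mean-square sense.
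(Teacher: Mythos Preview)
Your proposal is correct and follows essentially the same route as the paper: reuse the mixing identity \eqref{eq:mixing_full} from the tensor case, pass to autocorrelations to obtain the matrix equation $Y=GXG^*$ with $X$ \nnd and $\supp X\subseteq\Lambda$, invoke \autoref{lem:equivalence}\,\eqref{item:iii} together with the vectorization identity \eqref{eq:vectorization} to invert, and then unwind the Zak transform on each factor to express $Y$ in terms of $\Rf$ and reassemble $\Reta$. The only cosmetic difference is that the paper writes the reassembly via the translation operators $\mathcal T_j$ rather than an explicit partition-of-unity product, but the computation is the same.
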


\begin{proof}
Again, as in the deterministic case, we start with the support already rectified in the sense of \autoref{defn:M.is.rectified}.
We proceed in the same manner as in the direct product case until the mixing formula \eqref{eq:mixing_full}, which guarantees that for $p=0,\dotsc, L-1$ for all $(t,\nu)$ within the base rectangle $\rect = [0,a)\times[0,b)$
\[ \Zee(t,\nu) = \sum_{k,n\in\Z} \MTc \: \:\steta_{k,n}(t,\nu),\]
Taking autocorrelation on both sides, we get for all $p,p'=0,\dotsc,L-1$, and $(t,\nu),(t',\nu')\in\rect$,
\newcommand*{\miniMTc}[1][]{\Modul^{n#1} \Trans^{k#1} c}
\begin{align*}
 \E{\Zee_p(t,\nu)\, \conj{\Zee_{p'}(t',\nu')}} &= {\displaystyle \EXP\bigl\{\sumZ{k,n} ( \miniMTc )_p \steta_{k,n}(t,\nu) \!\sumZ{\kn} \conj{( \miniMTc['] )_{p'}\steta_{k'\!,n'}(t',\nu')}\bigr\} }\\
&=\sum_{\mathclap{\knkn\in\Z}}  (\MTc )_p \, \conj{(\MTc['])}_{p'} \E{\steta_{k,n}(t,\nu)\,\conj{\steta_{\kn}(t',\nu')}}.
\end{align*}
Let $\supp \Reta(\tntn)$ be covered by $L^2$ 4-dimensional parallelepipeds indexed by $\Lambda = \{(\bigknkn)\}$, that is,
\[
\Reta(\tntn) =
\sum_{\mathclap{(\knkn) \in \Lambda}} \E{\steta_{k,n}(t - ak,\nu-bn)\, \conj{\steta_{k'\!,n'}(t'-ak',\nu'-bn')}}.
\]
Without loss of generality, all $(k,n)$ and $(k',n')$ are distinct modulo $L$.\footnote{As in the deterministic case, the parameters $a,b,L$ can always be chosen in such a way that $\supp \Reta(\tntn) \subseteq [0,aL)\times[0,bL)$ up to a translation.
}
 For brevity, let $\mathcal{T}_j$ be the translation
\[ \mathcal{T}_j f(\tntn) = f(t-ak_j, \nu-bn_j, t'-ak'_j, \nu'-bn'_j). \]

Let us denote $Y_{p,p'} = \E{\Zee_p\,\Zee_{p'}^{*}}$ and $X_{\knkn} = \E{\steta_{k,n}\,\steta_{\kn}^{*}}$, where ${}^*$ stands for conjugate transpose. As covariance matrices of zero-mean random vectors, both  $X_{\knkn}$ and $Y_{p,p'}$ are \nnd, and conversely, every \nnd matrix is a covariance matrix for some random vector \cite{Papoulis}. With this notation, we write for each $(t,\nu), (t',\nu') \in \rect$ a deterministic matrix equation
\begin{equation}\label{eq:y=gxg}
Y = GXG^{*},
\end{equation}
where $G = [ \MTc ]$ is a Gabor matrix as in \eqref{eq:Gabor.frame}.

To have a a unique \nnd solution $X$ of the underdetermined system of equations \eqref{eq:y=gxg} with an arbitrary \nnd $Y$, by \autoref{lem:equivalence} it is necessary and sufficient that $\GG\eval_\Lambda$ is left invertible. Let $A$ denote its left inverse, \[
A  \coloneqq [\alpha_{(k,n),(p,p')}]_{k,n,p,p'=0}^{L-1} \text{ such that } A(\GG\eval_\Lambda) = \Id.
\]
Such $A$ may exist only if $\abs{\Lambda} \leq L^2$.

The autocorrelation of the channel spreading function can be recovered from $X$ by observing (starting with the definition \eqref{eq:Z=G_tilta} of $\steta_{k,n}$),
\begin{equation}\begin{split}\label{eq:13}
 \Reta(\tntn) &= \E{\steta(t,\nu)\,\conj{\steta(t',\nu')}}  \\ & = \sum_{j=0}^{L^2-1} \mathcal{T}_j X(\tntn)_{\knknj}\empi{a(\nu k_j - \nu'k'_j)} \\ & = \sum_{j=0}^{L^2-1} (A^{-1}\vect \mathcal{T}_j Y(\tntn)_{\knknj} \empi{a(\nu k_j-\nu'k'_j)}  \\ & = \sum_{j=0}^{L^2-1} \sum_{p,p'=0}^{L-1} \alpha_{j,p,p'}\mathcal{T}_j Y_{p,p'}(\tntn)\empi{a(\nu k_j -\nu'k'_j)},  \end{split}\end{equation}
where we construct $Y$ from the autocorrelation of the received signal $\f(t)=\H\Shah_c(t)$ via
\begin{align*}
 Y_{p,p'}(\tntn)  &= \E{\Zee_p(t,\nu)\, \Zee^*_{p'}(t',\nu')} \\ &= b^{-2} \empi{a(\nu p - \nu'p')} \, \E{\Zak\f(t+ap,\nu)\, \conj{\Zak\f(t'+ap',\nu')}} \\
&= b^{-2} \empi{a(\nu p - \nu'p')} \summ \empi{b^{-1}(m\nu-m'\nu')} \\
& \qquad \times  \E{\f(t+a(p-mL))\,\conj{\f(t'+a(p'-m'L))}}.
\end{align*}
Translating and simplifying, we get
\begin{align*}
\MoveEqLeft Y_{p,p'}(t-ak_j,\nu-bn_j; t'-ak'_j,\nu'-bn'_j)\\
&= b^{-2}\epi{ab(n_j p -n'_j p')}  \summ \empi{a((p+mL)\nu-(p'+m'L)\nu')} \\
&\qquad \times \Rf (t-a(k_j+mL-p), t'-a(k'_j+m'L-p).
\end{align*}
Upon substitution of the above back into \eqref{eq:13}, we get the desired reconstruction formula \eqref{eq:reconstruction}.
This completes the proof of \autoref{thm:stoch_nontensor_rectified}.
\end{proof}

If the support of $\Reta$ can be factored into a tensor square of some set of area less than one on the time-frequency plane, then \autoref{thm:stoch_nontensor_rectified} guarantees reconstruction of $\Reta$ from the \acorr $\Rf$ of the received measurement $\H \Shah_c$, which is a weaker result than \autoref{thm:stoch_tensor}. To wit, assume that $\Reta$ is supported on a product set $S\times S$ such that $S\times S$ is $(a,b,\Lambda)$-rectified, and that the area of $S$ is less than one. It follows that $\Lambda$ itself can be represented as a direct product of some index set, say, $\Lambda = \Gamma \times \Gamma$ such that $S$ is $(a,b,\Gamma)$-rectified.
For some vector $c$ the submatrix of its time-frequency shifts $G_{\Gamma}$ is left invertible \cite{PfaWal, LPW}.
Then $(\GG)\vert_\Lambda = \GGL{\Gamma}$ is also left invertible, and $\Reta(\tntn)$ can be recovered from $\Rf(t,t')$.

The strength of \autoref{thm:stoch_nontensor_rectified} is that it allows to exploit information about more complicated dependencies between scatterers manifested in the geometry of the support of the \acorr function $\Reta(\tntn)$.
\begin{figure}[H]
\centering
\goodcolor
\def\myscale{0.37}
\subfloat[Tensor square]{ \tikz[scale=\myscale] \tensorsupport; }
\subfloat[Arbitrary]{ \tikz[scale=\myscale] \curvysupport; } \\
\caption{Different types of support sets of autocorrelation functions.\label{fig:types.of.supports}}
\end{figure}
\begin{figure}[H]
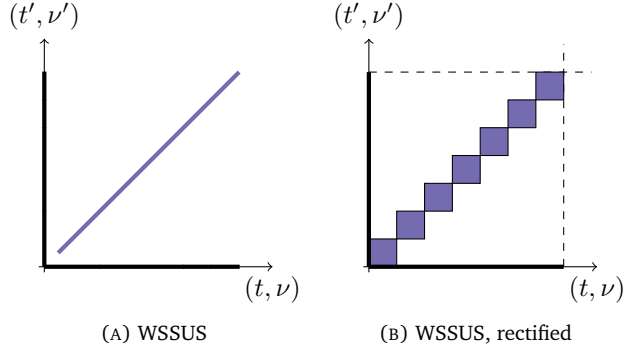

\centering
\goodcolor
\def\myscale{0.37}
\subfloat[WSSUS]{ \tikz[scale=\myscale] \wssussupport; }
\subfloat[WSSUS, rectified \label{subfig:wssus.rectified}]{ \tikz[scale=\myscale] \boxedwssussupport; }
\caption{Different types of support sets of autocorrelation functions, WSSUS case.\label{fig:wssus.supports}}
\end{figure}
For example, consider the important case of the WSSUS operators, already mentioned in the \autoref{rem:WSSUS1} after \autoref{thm:stoch_tensor}.
Let $\H^{\text{big}}$ be some WSSUS operator whose scattering function $C_\steta(t,\nu)$ is supported on a 2D rectangle $S^{\text{big}} \coloneqq [0,1)\times[0,L)$. This rectangle $S^{\text{big}}$ of area greater than one can always be rectified with a collection $\Lambda^{\text{big}}$ of $L^2$ translations of a box $[0,1)\times[0,\frac{1}{L})$.
From the WSSUS assumption we know that $\Reta(\tntn)=0$ whenever $(t,\nu)$ and $(t',\nu')$ are in different boxes, therefore, the 4D set $\supp \Reta(\tntn)$ is  $(1,\frac{1}{L},\Lambda^{\text{big}})$\Hyphdash* rectifiable, as shown in \autoref{subfig:wssus.rectified}.
It follows that in the sense of \autoref{thm:stoch_nontensor_rectified}, $\H^{\text{big}}$ is identifiable whenever the diagonal set $\Lambda^{\text{big}}$ \isPoU. Since, according to \autoref{thm:diagonal}, this is always the case with any such diagonal set, we obtain the following corollary of \autoref{thm:stoch_nontensor_rectified}.
\begin{theorem}\label{thm:wssus}
Let $\H\in \StOPW(S\times S)$ a WSSUS operator with  $S = \supp C(t,\nu)$ a compact set of arbitrary finite measure $\mu(S)<\infty$.
Then there exist $L$ prime, $L>\mu(S)$, $a,b>0$ with $ab=1/L$, a complex vector $c \in \C^L$ and a sequence $\{ \alpha_{jp} \}_{j,p=1}^L$ depending only on $c$ such that we can reconstruct the scattering function $C(t,\nu)$ of $\H$ from the autocorrelation of its response to an $L$-periodic $c$-weighted delta train $\Shah_c = \sumZ{k} c_{k \bmod L} \delta_{a k}$ using \eqref{eq:reconstruction.wssus}.
\end{theorem}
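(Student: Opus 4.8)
The plan is to obtain \autoref{thm:wssus} as a corollary of \autoref{thm:stoch_nontensor_rectified}, exactly as foreshadowed in \autoref{rem:WSSUS1} and in the paragraph preceding the statement. The crucial observation is that the WSSUS hypothesis
\[ \Reta(\tntn) = \delta(t-t')\,\delta(\nu-\nu')\,C_\steta(t,\nu) \]
forces the four-dimensional support of $\Reta$ onto the diagonal $\{(t,\nu,t,\nu)\}$, a set of four-dimensional measure zero, no matter how large the two-dimensional area $\mu(S)$ of $\supp C_\steta$ is. Thus, even though $\mu(S)>1$ rules out recovery of individual instances $\steta(\cdot\,;\w)$ via the deterministic \autoref{thm:det}, the degenerate geometry of $\supp\Reta$ keeps us safely inside the hypotheses of \autoref{thm:stoch_nontensor_rectified}, and the whole proof reduces to arranging a \emph{diagonal} rectification and invoking permissibility.

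First I would fix a prime $L$ large enough that $L>\mu(S)$ and that $S$, translated into the positive quadrant, fits inside a fundamental domain $[0,aL)\times[0,bL)$ of area $abL^2=L$ for suitable $a,b>0$ with $ab=1/L$ (enlarging $L$ beyond $\mu(S)$ if the set is elongated). By the Jordan-content covering underlying \autoref{lem:cover_det} — now with $L^2$ boxes of area $1/L$ tiling the domain — the set $S$ is covered by a subcollection of these lattice boxes. Because uncorrelated scattering makes $\Reta$ vanish whenever $(t,\nu)$ and $(t',\nu')$ lie in distinct boxes, the diagonal support is covered precisely by the corresponding diagonal parallelepipeds, which I index by the full diagonal set
\[ \Lambda \coloneqq \{(k,n,k,n)\colon (k,n)\in\{0,\dots,L-1\}^2\}; \]
padding to all $L^2$ residues costs nothing, since boxes missing $S$ contribute zero. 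Then $\SetM=\supp\Reta$ is $(a,b,\Lambda)$-symmetrically rectified in the sense of \autoref{defn:M.is.rectified} with $\abs{\Lambda}=L^2$, and one checks at once that $\Lambda$ is a symmetric \spd pattern, since each element has the diagonal form $(\lambda,\lambda)$ and condition \eqref{eq:admissible} is then trivial.

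Next I would invoke permissibility. The columns of $(\GG)\eval_\Lambda$ are exactly the vectorized rank-one matrices $g_\lambda g_\lambda^{*}$, where $g_\lambda$ ranges over the $L^2$ time-frequency shifts of $c$; their linear independence for a generic window $c$ is precisely the content of \autoref{thm:diagonal}, which asserts that every such diagonal pattern is permissible. Since $\abs{\Lambda}=L^2$, left invertibility here means square invertibility, so with that window $c$ the hypotheses of \autoref{thm:stoch_nontensor_rectified} are met and the reconstruction formula \eqref{eq:reconstruction} recovers $\Reta(\tntn)$ from the autocorrelation $\Rf$ of the response $\H\Shah_c$.

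It remains to read off $C_\steta$ and to collapse \eqref{eq:reconstruction} into the advertised formula \eqref{eq:reconstruction.wssus}. Since $\Reta(\tntn)=\delta(t-t')\,\delta(\nu-\nu')\,C_\steta(t,\nu)$, the scattering function is obtained by restricting the reconstructed $\Reta$ to the diagonal (formally, integrating away the two Dirac factors); correspondingly $\Rf$ inherits a diagonal structure that forces the mismatched translates and off-diagonal index terms in \eqref{eq:reconstruction} to drop out, leaving only the diagonal contributions. The main obstacle is precisely this last specialization: \eqref{eq:reconstruction} was derived for honest $L^2$ autocorrelations, so the passage to the delta-type $\Reta$ must be handled with care — justified here by treating distributions on par with integrable functions, as declared in the introduction, with the rigorous version deferred to \cite{PfaZh03, PfaZh04}. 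The conceptually hard input, the permissibility of the diagonal pattern, is supplied wholesale by \autoref{thm:diagonal}, so within this corollary the work is entirely bookkeeping.
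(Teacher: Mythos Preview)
Your proposal is correct and follows essentially the same route as the paper: rectify $S$ with boxes of area $1/L$ for a prime $L>\mu(S)$, observe that the WSSUS hypothesis forces $\supp\Reta$ onto the diagonal so that the induced four-dimensional rectification is the diagonal pattern, invoke \autoref{thm:diagonal} for permissibility, and apply \autoref{thm:stoch_nontensor_rectified}. The paper's own proof is terser and leaves the permissibility step implicit (it was announced just before the theorem), but the logical skeleton is identical; your explicit padding to the full $L^2$-element diagonal $\Lambda$ is in fact cleaner, since the hypothesis $\abs{\Lambda}=L^2$ of \autoref{thm:stoch_nontensor_rectified} is then met on the nose.
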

\begin{proof}
For an arbitrary $\eps$, it is always possible to find $a,b>0, \frac{1}{ab}>\sqrt{\mu(S)+\eps}$ such that the support of the scattering function $S$ is covered by $N$ translations of the rectangle $[0,a)\times[0,b)$ (as usual, indexed by $\Lambda$), such that $\mu(S)\leq Nab \leq \mu(S)+\eps$. Without loss of generality, all $(k,n)\in\Lambda$ are distinct modulo $L=\frac{1}{ab}$. We cover the 4D diagonal of the set $S\times S$ with $N$ 4D parallelepipeds of volume $1/(ab)^2$, so the total volume of the cover is $\frac{N}{L^2} \leq {1}$.
By \autoref{thm:stoch_nontensor_rectified}, there exists $c\in \C^L$ such that we can reconstruct the \acorr of the spreading function from the response $\f = \H \Shah_c$ using a version of  \eqref{eq:reconstruction} that takes advantage of the stationarity of $\H$ and periodicity of $\Shah_c$:
\begin{equation}\label{eq:reconstruction.wssus}
\begin{split}
C(t,\nu)  &=  b^{-2}\sum_{j=0}^{L^2-1} \sum_{p,p'=0}^{L-1}\sum_{m\in\Z} \alpha_{j,p,p'}\epi{a[(p-p')(\nu + bn_j) - \nu m L]}\\
&\quad \times \Rf (t-a(k_j-p), \, t-a(k_j+m'L-p').
\end{split}
\end{equation}
\end{proof}

\section{Permissible patterns}\label{sec:frame_theory}
The preceding discussion shows that it is ultimately important to study which \spd patterns $\Lambda$ are permissible, that is, for which sets $\Lambda$ there exists $c\in\C^L$ such that $\GG\eval_\Lambda$ is full rank.
Consider the spreading function $\steta(t,\nu)$ of $\H$ given by
\begin{equation*}\label{eq:steta_example}
\steta(t,\nu) = \sum_{k=0}^{L-1} \sum_{n=0}^{L-1} \steta_{k,n}(t - ka,\nu - nb). \end{equation*}
with $\steta_{k,n}(t,\nu)\in L^2(\R^2)$ supported on $[0,a)\times[0,b)$.
Due to the specifics of the underlying time-frequency analysis, it will be convenient to index the matrices with the finite index set $\Index$ given by
\begin{equation}\label{eq:Index}
{\textstyle \Index =  \left\{ (0,0), (0,1), \dotsc, (0,L-1), (1,0),\dotsc, (L-1,L-1)\right\}.}
\end{equation}

We call the \emph{autocorrelation pattern} of $\steta$ the indicator matrix $\ACP \in \{0,1\}^{\Index \times \Index}$
\begin{equation*}
\ACP(\lambda,\lambda') = \begin{cases}
1, & \text{ if for some }(t,\nu), (t'\!,\nu') \in \rect,  \E{\steta_{\lambda}(t,\nu) \, \conj{\steta_{\lambda'}(t',\nu')}} \neq 0 \\
0, &  \text{ otherwise,}
\end{cases}
\end{equation*}
where $\lambda = (k,n)$, $\lambda' = (k',n')$, $\lambda, \lambda' \in \Index$.
Clearly, $\ACP$ must be symmetric, moreover,  just as in \eqref{eq:admissible},
\begin{equation}\label{eq:acp}
\begin{split}
\ACP(\lambda,\lambda')=1 \quad \text{ implies } \quad  \ACP(\lambda',\lambda) = \ACP(\lambda,\lambda)=\ACP(\lambda',\lambda') = 1,
\end{split}
\end{equation}
and conversely, for any matrix $A\in \C^{\Index \times \Index}$ that satisfies \eqref{eq:acp}, there exists an operator with the spreading function $\steta_A(\tntn)$ whose \acorr pattern is $A$.
We visualize autocorrelation patterns with diagrams such as \autoref{fig:L=2}, where shaded boxes correspond to nonzero correlation between patches.

\autoref{lem:equivalence} indicates that the problem at hand is linked to the Haar property of Gabor systems.
We explore this connection in more detail in \autoref{ssec:rank_preserve}.
The Haar property of a finite-dimensional Gabor frame reads that any $L$ columns of a generic Gabor matrix $G$ are in general linear position. By the result from \cite{LPW}, this holds for almost every $G$ with $L$ prime.
Therefore, any deterministic spreading function supported on $L$ boxes of area $1/L$ on a time-frequency plane can be identified.
In other words, any set $\Gamma \subset \Index$ such that $\abs{\Gamma}=L$ indexes a subset of columns of $G$ that is linearly independent for almost every seed vector $c$.
This contrasts with the stochastic setting, where we will see that there exist plenty of \spd patterns that correspond to submatrices of $\GG$ which are rank-deficient for every choice of $c$.

\newcommand{\cc}{\tensoratom{c}{}{c}{}}

Generally, $\GG$ can be viewed as a Gabor system on the non-cyclic group $\Z_L\times \Z_L \cong  \Index $, generated by the window $\cc$,
\begin{multline*}
\GG = \{ \, \pi(\bigknkn) (\cc) \colon \\
(k,n),(k',n')\in \Index, \pi(\bigknkn) = \conj{\MTc } \otimes \, \Modul^{n'} \Trans^{k'}\!\!c \}.
\end{multline*}
The Haar property of $\GG$ would require all subsets of $\GG$ of order $L^2$ to be linearly independent, which is not the case for any prime number $L\geq 2$.
However, the positive-definiteness of the autocorrelation function demands the autocorrelation pattern to satisfy property \eqref{eq:acp}. This precludes certain subsets of $\GG$ from being tested for linear dependence.
Below we show that this restriction implies that for $L=2$ every \spd pattern is permissible; not so for $L\geq 3$.

\subsection{Case \texorpdfstring{$L=2$}{L=2}}\label{ssec:L=2}On Figures \ref{fig:L=2} and \ref{fig:L=2_other} we show all possible \spd patterns for $L=2$. In addition, the last pattern on \autoref{subfig:2offtopic}, corresponds to a linear dependent set of elements of $\GG$. However, due to lack of symmetry, such a set cannot appear as a support set for the \acorr of the spreading function of the stochastic operator.
We mention it here to highlight the connection of the stochastic operator identification theory with the analysis of Gabor frames on general abelian groups, for this example, $\Z_2\times \Z_2$. Further properties of such frames, including this pattern, can be found in~\cite{KPR08, PfaFinite}.

\begin{figure}[h]
\def\myscale{0.5}
\centering
\goodcolor
\subfloat
{\label{subfig:2tensor12} \tikz[scale=\myscale] \mysympattern{2}{1,2}{1/2}; }
\subfloat
{\label{subfig:2tensor13} \tikz[scale=\myscale] \mysympattern{2}{1,3}{1/3}; } \\
\subfloat
{\label{subfig:2tensor14} \tikz[scale=\myscale] \mysympattern{2}{1,4}{1/4}; }
\subfloat
{\label{subfig:2tensor23} \tikz[scale=\myscale] \mysympattern{2}{2,3}{2/3}; }\\
\subfloat
{\label{subfig:2tensor24} \tikz[scale=\myscale] \mysympattern{2}{2,4}{2/4}; }
\subfloat
{\label{subfig:2tensor34} \tikz[scale=\myscale] \mysympattern{2}{3,4}{3/4}; }
\caption{Tensor rank-1 \acorr patterns for $L=2$. Any rank-1 tensor pattern is permissible by \eqref{eq:rank_tensor}.\label{fig:L=2}}
\subfloat[]
{\label{subfig:2diag} \tikz[scale=\myscale] \mysympattern{2}{1,...,4}{}; }
\subfloat[]
{\label{subfig:2offtopic} \badcolor \tikz[scale=\myscale] \mynonsympattern{2}{1,3}{1/2,4/3}; }
\caption{Other \acorr patterns for $L=2$. \label{fig:L=2_other}}
\end{figure}
Interestingly, the only admissible patterns on $L=2$ belong to one of the two types, shown on \autoref{fig:L=2} and \autoref{subfig:2diag}. The first type contains sets that can be represented as rank-1 products\footnote{We say that an element $b \in S\otimes S$ has \emph{tensor rank} $k$ if $k$ is the smallest number with the property that there exist $\lst{a}_k, \lst{b}_k \in S$ such that $\Lambda = a_1 \otimes b_1 + \dotsb + a_k \otimes b_k$.} of sets of order $L$ in $\Index$, that is, there exists a set $\Gamma$ or order $\abs{\Gamma} = L$ such that $\Lambda = \Gamma\times \Gamma$. Such product patterns inherit the linear independence properties of their factors, since the Haar property of $G$ guarantees
\begin{equation}\label{eq:rank_tensor}
\rank \bigl(  \GG\eval_{\Gamma\times \Gamma} \bigr) = (\rank G_{\Gamma})^2 = L^2.
\end{equation}
The second type, all boxes on the diagonal, of maximal possible tensor rank, is also a permissible pattern for $L=2$, as we observe the determinant of the matrix
\begin{equation*}\label{eq:det.eq.2}
\det  \bigl( \GG\eval_{\text{diag}} \bigr)= 4(\abs{c_0}^4 - \abs{c_1}^4)(c_0^2\,\conj{c_1}^2 - c_1^2\,\conj{c_0}^2) \neq 0
\end{equation*}
or almost all choices of the generating atom $c = [c_0 \  c_1]^t$.

This concludes the analysis of the $L=2$ case with a happy ending:
\begin{proposition}
Let a function $\Reta(\tntn)$ be supported on a set of measure less than or equal to one covered with 4 boxes $\rect^2+(k_ja,n_jb, k'_ja, n'_jb)$ for some $\{(k_j,n_j, k'_j, n'_j) \in \Index\times\Index\}_{j=0}^3$, and some $\rect^2=[0,a)\times[0,b)\times[0,a)\times[0,b)$ such that $ab = \nicefrac{1}{2}$ and all $(k,n), (k',n')$  distinct modulo $2$. Let $\steta(t,\nu)$ be some function with $\Reta(\tntn)$ as its \acorr function, and $\H$ a stochastic channel with $\eta$ its spreading function.
Then we can recover $\Reta(\tntn)$ from the response of $\H$ to the sounding signal
\[
\Shah_c(t) = c_0\sum_m \delta(t-a - 2am) + c_1\sum_m \delta(t-2am)
\]
for almost all $c = [c_0 \ c_1]^t \in \C^2$, $\abs{c_0}=\abs{c_1}=1$.
\end{proposition}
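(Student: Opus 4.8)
The plan is to read the proposition as the $L=2$ specialization of \autoref{thm:stoch_nontensor_rectified}, so that the whole task collapses to verifying that theorem's single hypothesis for the pattern at hand: that the support pattern $\Lambda$ of $\Reta$ is permissible, i.e.\ that $\GG\eval_\Lambda$ is (left) invertible for almost every admissible window $c$. Once this is in hand, the recovery of $\Reta(\tntn)$ from $\Rf=\E{\f\,\conj{\f}}$ with $\f=\H\Shah_c$ is precisely formula \eqref{eq:reconstruction}, and nothing further is required. So the genuine content is a finite linear-algebra verification, and everything reduces to checking invertibility of a $4\times 4$ matrix.

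First I would classify the admissible patterns. Since $\Reta$ is an autocorrelation, its support pattern $\Lambda\subseteq\Index\times\Index$ is \spd and hence obeys \eqref{eq:acp}; combined with the hypothesis that $\Lambda$ consists of exactly $L^2=4$ boxes with all indices distinct modulo $2$, a one-line combinatorial argument pins down the possibilities. If $\Lambda$ contains an off-diagonal pair $(\lambda,\lambda')$ with $\lambda\neq\lambda'$, then \eqref{eq:acp} forces the entire symmetric block $\{\lambda,\lambda'\}\times\{\lambda,\lambda'\}$ into $\Lambda$; since $\abs{\Lambda}=4$ this already exhausts $\Lambda$, so $\Lambda=\Gamma\times\Gamma$ with $\abs{\Gamma}=2$. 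Otherwise all four entries are diagonal, and because $\Index$ has only four elements $\Lambda$ must be the full diagonal. This recovers exactly the two families drawn in \autoref{fig:L=2} and \autoref{subfig:2diag}.

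Next I would dispatch each family. For a tensor square $\Lambda=\Gamma\times\Gamma$, the identity \eqref{eq:rank_tensor} gives $\rank(\GG\eval_{\Gamma\times\Gamma})=(\rank G_\Gamma)^2$, so invertibility of the $4\times 4$ matrix $\GG\eval_\Lambda$ reduces to invertibility of the $2\times 2$ submatrix $G_\Gamma$; since $L=2$ is prime, the Haar property of \cite{LPW} makes $G_\Gamma$ invertible for almost every $c$, and in fact (by the footnote to \autoref{thm:stoch_tensor}) for almost every constant-modulus $c$, which is where the normalization $\abs{c_0}=\abs{c_1}=1$ is harmless. For the diagonal pattern I would instead compute the $4\times 4$ determinant directly and invoke the displayed value $\det\bigl(\GG\eval_{\text{diag}}\bigr)=4(\abs{c_0}^4-\abs{c_1}^4)(c_0^2\conj{c_1}^2-c_1^2\conj{c_0}^2)$, a nonzero polynomial in $c$, hence nonvanishing off a set of measure zero.

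The step I expect to require the most care is the interaction between the diagonal pattern and the constant-modulus normalization. The first factor $\abs{c_0}^4-\abs{c_1}^4$ of that determinant vanishes identically on the torus $\abs{c_0}=\abs{c_1}=1$, so for the diagonal pattern $\det\bigl(\GG\eval_{\text{diag}}\bigr)=0$ for \emph{every} modulus-one $c$. Consequently the clean assertion ``for almost all $c$ with $\abs{c_0}=\abs{c_1}=1$'' is genuinely available only for the six tensor-square patterns; for the diagonal one must relax the normalization, permitting $\abs{c_0}\neq\abs{c_1}$, and assert invertibility of $\GG\eval_\Lambda$ for almost all $c\in\C^2$. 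I would therefore phrase the conclusion as holding for almost every $c\in\C^2$, remarking that constant-modulus windows already suffice in the tensor case. Modulo this caveat, in both families $\GG\eval_\Lambda$ is invertible, so $\Lambda$ is permissible and \autoref{thm:stoch_nontensor_rectified} applies verbatim to produce \eqref{eq:reconstruction}.
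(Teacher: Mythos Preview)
Your argument mirrors the paper's exactly: the proposition is presented there as the summary of the discussion in \autoref{ssec:L=2}, which enumerates the \spd patterns for $L=2$ as either the six tensor squares of \autoref{fig:L=2} (handled via \eqref{eq:rank_tensor} and the Haar property) or the full diagonal of \autoref{subfig:2diag} (handled by the displayed determinant), and then appeals implicitly to \autoref{thm:stoch_nontensor_rectified}. Your caveat about the diagonal pattern under the normalization $\abs{c_0}=\abs{c_1}=1$ is in fact sharper than the paper's own treatment, which records $\det\bigl(\GG\eval_{\text{diag}}\bigr)=4(\abs{c_0}^4-\abs{c_1}^4)(c_0^2\conj{c_1}^2-c_1^2\conj{c_0}^2)$ but does not note that the first factor vanishes identically on the torus; the proposition as stated in the paper carries the same defect you flag, and your suggested fix (drop the modulus constraint for the diagonal case) is the natural one.
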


\subsection{Case \texorpdfstring{$L>2$}{L>2}}\label{ssec:L=2_and_up}Classification of all \acorr patterns with $L=3$ already presents some challenges. For once, there are over 5000 possible \spd patterns
\[
1 + \binom97\binom72 + \binom95\left(2\binom54+3\binom53\right) = 5796.
\]
Here, 1 corresponds to the diagonal case, plus there are $\binom97$ choices of 7 boxes on the diagonal, any two of which can be correlated corresponding to the $\binom72$ factor, etc.\footnote{The case of even number of boxes on the diagonal is always counted within the larger odd case; by symmetry considerations, every correlation between boxes adds two boxes on the field). }

The problem of classifying all possible arrangements of boxes becomes intractable very quickly with $L$.
In \autoref{ssec:rank_preserve}, we show that whenever some $\Lambda$ is a permissible pattern for almost all $c$, there is a related (via the permutations of the Gabor elements) family of $\Lambda^\perm$ that will be permissible patterns for almost all $c$.
Such observations simplify the classification problem, although they do not suffice to give a complete classification.

Secondly, alas, unlike $L=2$, some \spd patterns correspond to rank-deficient subsets of $\GG$. That is, there are sets that satisfy \eqref{eq:admissible} but are not permissible, for example, see \autoref{fig:L3bad}. It turns out that there are structural reasons for patterns on \autoref{fig:subfig3} and \autoref{fig:L3bad} to be permissible (respectively, defective).

With little effort it can be seen that patterns with tensor rank 2 (as on \autoref{fig:L3bad}) can never be permissible, owing solely to the properties of the tensor product space and not to the specifics of the underlying Gabor structure. In fact, for any $L$,  permissible patterns cannot contain two complete large sets of pairwise tensor products.
(Here, such two sets are $\{\steta_{00},\steta_{01}\}$ and $\{\steta_{02}, \steta_{10}\}$, and in this context, \emph{large} means that $\card{\steta_{00},\steta_{01}} + \card{\steta_{02}, \steta_{10}} = 2+2>3$.)

\newcommand*{\ggii}{\tensoratom{g}{i}{g}{i'}}
\newcommand*{\ggjj}{\tensoratom{g}{j}{g}{j'}}
\newcommand*{\bb}[1]{\tensoratom{b}{#1}{b}{#1}}
\begin{proposition}\label{thm:two_squares}
Let $G\in \C^{n\times m}$ be an arbitrary matrix, and $G_1, G_2$ its submatrices comprising columns of $G$ indexed, respectively, by $\Gamma_1$ and $\Gamma_2$ such that $\Gamma_1 \cap \Gamma_2 = \varnothing$. Denote
$
\Lambda = (\Gamma_1 \times \Gamma_1) \cup  (\Gamma_2 \times \Gamma_2).
$
If $\abs{\Gamma_1} + \abs{\Gamma_2} > \rank G$, then  the set of elements in $\tensoratom{G}{}{G}{}\eval_\Lambda$ is linearly dependent.
\end{proposition}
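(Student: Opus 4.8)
The plan is to prove linear dependence by a pure dimension count: I will show that $\GG\eval_\Lambda$ has strictly more columns than the dimension of their span. Write $g_i$ for the $i$th column of $G$, so that the column of $\GG=\conj{G}\otimes G$ indexed by a pair $(\mu,\mu')$ is $\conj{g_\mu}\otimes g_{\mu'}$. Since $\Gamma_1\cap\Gamma_2=\emptyset$, the index sets $\Gamma_1\times\Gamma_1$ and $\Gamma_2\times\Gamma_2$ are disjoint as sets of pairs, so $\GG\eval_\Lambda$ has exactly $\abs{\Gamma_1}^2+\abs{\Gamma_2}^2$ columns. I then set $W_i\coloneqq\Span\{g_\mu\colon\mu\in\Gamma_i\}$ and $r_i\coloneqq\dim W_i\le\abs{\Gamma_i}$ for $i=1,2$, together with $r_0\coloneqq\dim(W_1\cap W_2)$. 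Since a spanning set of $W_i$ yields a spanning set of $\conj{W_i}\otimes W_i$, the columns indexed by $\Gamma_i\times\Gamma_i$ span precisely $\conj{W_i}\otimes W_i$, and hence the span of all columns of $\GG\eval_\Lambda$ is $(\conj{W_1}\otimes W_1)+(\conj{W_2}\otimes W_2)$.

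The one nonroutine step — and the part I expect to be the main obstacle — is the tensor intersection identity
\[
(\conj{W_1}\otimes W_1)\cap(\conj{W_2}\otimes W_2)=\conj{(W_1\cap W_2)}\otimes(W_1\cap W_2).
\]
I would justify it through the matrix picture: identifying $\conj{u}\otimes v$ with the outer product $\conj{u}\,v^t$, the subspace $\conj{W_i}\otimes W_i$ is exactly the space of matrices whose column space lies in $\conj{W_i}$ and whose row space lies in $W_i$. A matrix lies in both subspaces precisely when its column space lies in $\conj{W_1}\cap\conj{W_2}=\conj{W_1\cap W_2}$ and its row space lies in $W_1\cap W_2$, which is the asserted identity. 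Since $\dim(\conj{W_i}\otimes W_i)=r_i^2$ and $\dim(\conj{(W_1\cap W_2)}\otimes(W_1\cap W_2))=r_0^2$, inclusion–exclusion then gives
\[
\dim\Span\bigl(\GG\eval_\Lambda\bigr)=r_1^2+r_2^2-r_0^2.
\]

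Finally I would compare the two counts. The excess of columns over the span dimension is
\[
\bigl(\abs{\Gamma_1}^2+\abs{\Gamma_2}^2\bigr)-\bigl(r_1^2+r_2^2-r_0^2\bigr)
=\bigl(\abs{\Gamma_1}^2-r_1^2\bigr)+\bigl(\abs{\Gamma_2}^2-r_2^2\bigr)+r_0^2,
\]
a sum of three nonnegative terms (using $r_i\le\abs{\Gamma_i}$). This excess vanishes only if $r_1=\abs{\Gamma_1}$, $r_2=\abs{\Gamma_2}$, and $r_0=0$ simultaneously; but then $\dim(W_1+W_2)=r_1+r_2-r_0=\abs{\Gamma_1}+\abs{\Gamma_2}>\rank G$, contradicting $W_1+W_2\subseteq\Image G$. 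Hence the excess is strictly positive, so $\GG\eval_\Lambda$ has more columns than the dimension of their span, and the columns are therefore linearly dependent, which is the claim.
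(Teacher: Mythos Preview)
Your argument is correct and takes a genuinely different route from the paper. The paper gives a short constructive proof: from $\abs{\Gamma_1}+\abs{\Gamma_2}>\rank G$ it picks a nonzero vector $(b_1,b_2)\in\ker[G_1\,|\,G_2]$, sets $B_i=b_i b_i^{*}$, and observes that the Hermitian block matrix $N=\operatorname{diag}(B_1,-B_2,0)$ is supported on $\Lambda$ and satisfies $GNG^{*}=(G_1b_1)(G_1b_1)^{*}-(G_2b_2)(G_2b_2)^{*}=0$; linear dependence then follows via \autoref{lem:equivalence}. Your approach instead computes the exact rank of $\GG\eval_\Lambda$ by a dimension count, the key step being the tensor intersection identity $(\conj{W_1}\otimes W_1)\cap(\conj{W_2}\otimes W_2)=\conj{(W_1\cap W_2)}\otimes(W_1\cap W_2)$, which you justify cleanly through the column/row space description of the matrix picture. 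The paper's proof is shorter and ties directly into the Hermitian framework of \autoref{lem:equivalence} used elsewhere; yours is self-contained, avoids that lemma entirely, and yields the stronger quantitative statement $\rank\bigl(\GG\eval_\Lambda\bigr)=r_1^2+r_2^2-r_0^2$ rather than merely rank deficiency.
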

\begin{proof}
Since $\abs{\Gamma_1} + \abs{\Gamma_2} > \rank G$, it follows that there exists $[b_1 \ b_2 ]^t \in \ker [ G_1 | G_2 ] \neq 0$, that is,
\begin{equation*}\label{eq:a.in.ker}
G_1 b_1 + G_2 b_2 = 0.
\end{equation*}
Now consider $B_1 = \bb{1} , B_2 = \bb{2}$. Both are Hermitian, and
\begin{align*}
G_1 B_1 G_1^* - G_2 B_2 G_2^* &= (G_1 b_1)(G_1 b_1)^* - (G_2 b_2)(G_2 b_2)^* \\
&= (G_1 b_1)(G_1 b_1)^* - (-G_1 b_1)(-G_1 b_1)^* = 0.
\end{align*}
This means that we have found a Hermitian matrix
\[
N = \begin{bmatrix} B_1 & 0 & 0 \\ 0 & B_2 & 0\\  0 & 0 & 0 \end{bmatrix}\
\]
 supported on $\Lambda$ such that $GNG^*=0$ and, hence, $\tensoratom{G}{}{G}{}\eval_\Lambda$ is linearly dependent. It follows that $\Lambda$ is not a permissible pattern.
\end{proof}
\begin{figure}[t]
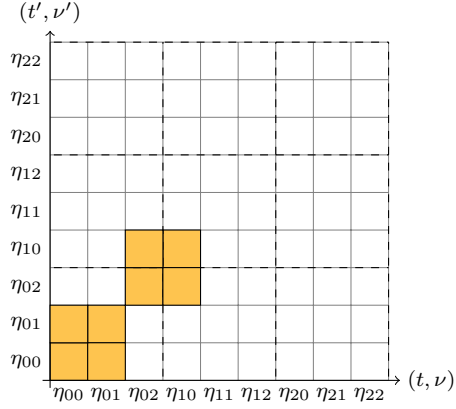

\centering
\def\myscale{0.5}
\badcolor
\tikz[scale=\myscale] \mysympattern{3}{1,...,4}{1/2,3/4}; \caption{According to \autoref{thm:two_squares}, any rank-2 pattern is not a permissible pattern.\label{fig:L3bad}}
\end{figure}

On the other hand, the patterns with maximal tensor rank (that is, diagonal patterns) will \emph{always} be permissible. This corresponds to the important case of WSSUS operators.
We give here an elementary proof for finite dimensions. (See also \cite[Theorem 3.1]{BenPfa} for a discussion of this in the realm of Hilbert-Schmidt operators in infinite dimensions.)
\newcommand{\GGdiag}{\GG\eval_{\text{diag}}}
\newcommand{\MT}{\Modul^n \Trans^k}
\begin{theorem}\label{thm:diagonal}
Denote $\pi(k,n) \coloneqq \MT$. Let $G = \{\pi(k,n) c\}_{k,n=0}^{L-1}$  be a Gabor frame generated by $c$. Then for almost all $c \in \C^L$, the set $\GGdiag$ of all tensor products of each Gabor element with itself
\[ \GGdiag \coloneqq \{ \tensoratom{\pi(k,n)c}{}{\pi(k,n) c}{} \}_{k,n=0}^{L-1}
\]
is linearly independent, that is, for almost all $c\in \C^L$ the diagonal set
\[ \Lambda_{\text{diag}} = \{ (k,n),(k,n) \}_{k,n=0}^{L-1} \]
is a permissible pattern.
\end{theorem}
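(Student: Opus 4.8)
The plan is to prove that the $L^2\times L^2$ matrix $\GGdiag$, whose columns are the vectorizations of the rank-one matrices $v_{k,n}v_{k,n}^{*}$ with $v_{k,n}\coloneqq\pi(k,n)c$, has nonvanishing determinant for almost every $c$; by \autoref{lem:equivalence} this is exactly linear independence of the set $\GGdiag$ and hence permissibility of $\Lambda_{\text{diag}}$. First I would observe that each entry of $\GGdiag$ is sesquilinear in $c$, so $\det\GGdiag$ is a polynomial in the real and imaginary parts of the entries of $c$. It therefore suffices to show that this polynomial is not identically zero, for then its zero set is Lebesgue-null in $\C^{L}\cong\R^{2L}$ and the claim follows. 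The whole difficulty is thus concentrated in exhibiting the structural factorization below.

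The conceptual heart of the argument is a change of coordinates from the standard vectorization to the Weyl (time-frequency shift) basis $\{\pi(k',n')\}_{k',n'=0}^{L-1}$ of $\C^{L\times L}$, which is orthogonal with respect to the Hilbert-Schmidt inner product. I would write $cc^{*}=\sum_{k',n'}\gamma_{k',n'}\,\pi(k',n')$ and invoke the Weyl commutation relation $\pi(k,n)\pi(k',n')\pi(k,n)^{-1}=\epi{(nk'-n'k)/L}\pi(k',n')$ (which follows at once from $\Trans^{k}\Modul^{n'}=\empi{n'k/L}\,\Modul^{n'}\Trans^{k}$). Since $v_{k,n}v_{k,n}^{*}=\pi(k,n)\,cc^{*}\,\pi(k,n)^{-1}$, this gives
\[
v_{k,n}v_{k,n}^{*}=\sum_{k',n'}\gamma_{k',n'}\,\epi{(nk'-n'k)/L}\,\pi(k',n').
\]
Hence, in the Weyl basis, the coordinate matrix $M$ of the family $\{v_{k,n}v_{k,n}^{*}\}$, with row index $(k',n')$ and column index $(k,n)$, factors as $M=D\,W$, where $D=\operatorname{diag}(\gamma_{k',n'})$ carries all the $c$-dependence and $W_{(k',n'),(k,n)}=\epi{(nk'-n'k)/L}$ is a fixed matrix. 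As the passage from standard vectorization to Weyl coordinates is a $c$-independent invertible linear map, $\det\GGdiag$ and $\det M$ differ only by a nonzero constant.

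Next I would note that $W_{(k',n'),(k,n)}=\epi{k'n/L}\,\empi{n'k/L}$ is, up to a permutation of columns, the Kronecker product $F\otimes\conj{F}$ of the $L\times L$ DFT matrix $F_{k',n}=\epi{k'n/L}$ with its conjugate, so $\det W=\pm(\det F)^{L}(\conj{\det F})^{L}\neq0$. Consequently
\[
\det M=\Bigl(\prod_{k',n'=0}^{L-1}\gamma_{k',n'}\Bigr)\,\det W ,
\]
and invertibility of $\GGdiag$ reduces to the single condition that every $\gamma_{k',n'}$ be nonzero. By orthogonality of the Weyl basis, $\gamma_{k',n'}=\tfrac1L\operatorname{tr}\!\bigl((cc^{*})\pi(k',n')^{*}\bigr)$ equals, up to the nonzero factor $\tfrac1L$ and a complex conjugation, the discrete ambiguity value $A_{c}(k',n')=\ip{\pi(k',n')c,c}$.

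It then remains to show that none of these $L^{2}$ ambiguity values vanishes identically in $c$, and here I expect the only genuine subtlety of the whole proof to lie. Fortunately, over $\C$ it is immediate: the sesquilinear form $c\mapsto\ip{\pi(k',n')c,c}$ is identically zero only if $\pi(k',n')=0$, which never occurs since each $\pi(k',n')$ is unitary. Thus each $\gamma_{k',n'}$ is a nonzero polynomial in the real and imaginary parts of $c$, and since polynomials form an integral domain, the product $\prod_{k',n'}\gamma_{k',n'}$ is not identically zero either. Therefore $\det\GGdiag$ is a nonzero polynomial, its zero set has Lebesgue measure zero, and the columns of $\GGdiag$ are linearly independent for almost every $c\in\C^{L}$. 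I would emphasize that the main obstacle is really the bookkeeping needed to recognize the clean diagonal-times-DFT structure $M=DW$; once it is in place, invertibility of the DFT together with the generic non-vanishing of the ambiguity function completes the argument for every $L$, prime or not.
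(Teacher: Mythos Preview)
Your argument is correct and follows essentially the same route as the paper: both expand the rank-one operators $v_{k,n}v_{k,n}^{*}$ in the Weyl (equivalently, spreading-function) basis, use the commutation relation to see that conjugation by $\pi(k,n)$ acts diagonally via the characters $e^{2\pi i(nk'-n'k)/L}$, identify the diagonal entries with the ambiguity values $V_{c}c$, and conclude from the generic nonvanishing of $V_{c}c$ together with the invertibility of the character (DFT) matrix. Your write-up makes the factorization $M=DW$ and the invertibility of $W\sim F\otimes\conj{F}$ explicit, whereas the paper packages the same step as ``whenever $\eta_{0,0}$ has full support, the family $\{M^{(k,n)}\eta_{0,0}\}$ is linearly independent''; this is a presentational, not substantive, difference.
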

\begin{proof}
It is a straightforward observation that the linear independence of the set $\GGdiag\in \C^{\Index\times \Index}$ is equivalent to the linear independence of the family of the corresponding finite-dimensional Hilbert-Schmidt operators $P_{k,n} \colon \C^L\to\C^L$ given by
\[
P_{k,n}x = \ip{x,\pi(k,n)c}\,\pi(k,n)c.
\]
Clearly, $P_{k,n} = (\MT) P_{0,0} (\MT)^*$.
From the spreading function representation of $H \colon \C^L\to \C^L$, \[
H = \sum_{p,q=0}^{N-1} \eta[p,q] \Modul^p \Trans^q
\]
it is easy to see with the help of the commutation relation $\Trans^k\Modul^n = \empi{kn}\MT$ that for any $H$ we have \begin{align*}
(\MT) H (\MT)^*
=\sum_{p,q=0}^{L-1} (\Modul^{(k, n)} \eta[p,q] ) \Modul^p \Trans^q,
\end{align*}
where $\Modul^{(k,n)}\eta[p,q] \coloneqq \epi{(qk-pn)}\eta[p,q]$.
In particular, the spreading functions of $P_{k,n}$ and $P_{0,0}$ satisfy
$\eta_{k,n} = \Modul^{(k,n)} \eta_{0,0}$.
Therefore, the linear independence of the family of operators $\{P_{k,n}\}_{k,n=0}^{L-1}\subseteq \HS(\C^L)$ is equivalent to the linear independence of the family $\{\Modul^{(k,n)}\eta_{0,0}\}\subseteq \C^{L^2}$.

In finite dimensions, the short time Fourier transform of $c$ with respect to itself takes form
\[ V_c c[p,q] = \ip{c, \Modul^p \Trans^q \, c} = \sum_r \empi{p r/L} \, c[r]\,\conj{c}[r-q].\]
Here and in what follows, the summation is over $\Z_L$ and all the indices of vectors $c,x \in \C^L$ are taken modulo $L$.
We compute
\begin{align*}
\allowdisplaybreaks
P_{0,0}x[m] &= \ip{x, c} \, c[m]  = \sum_q x[q]\, \conj{c}[q] \, c[m] \\
        &=\sum_q x[m-q] \,\conj{c[m-q]} \, c[m] \displaybreak[3]\\
        &=\sum_q x[m-q] \, \sum_r \delta[r-m] \, \conj{c[r-q]} \, c[r] \displaybreak[3]\\
         &= \sum_q  x[m-q] \, \sum_r \left( \frac{1}{L} \sum_p \epi{p(r-m)/L} \right) \conj{c}[r-q] \, c[r] \displaybreak[3]\\
        &= \sum_p \sum_q \bigl( \frac{1}{L} \sum_r \empi{p r/L} c[r] \, \conj{c}[r-q] \bigr) \epi{m p/L} \, x[m-q] \displaybreak[3]\\
        &= \sum_p \sum_q \left(\frac{1}{L}V_c c[p,q]\right)  \Modul^p \Trans^q x[m],
\end{align*}
which proves that $\eta_{0,0}[p,q] = \frac{1}{L} V_c c[p,q]$.

We claim that for almost every $c\in \C^L$, $V_c c$ has full support. For a fixed pair $p,q\in \Z_L\times\Z_L$,  the set of solutions $Z_{p,q} = \{c\in \C^L: V_c c\,[p,q]=0\}$ is a manifold of zero measure in $\C^L$.  Hence, $V_c c$ has full support almost everywhere, namely whenever $c \in \C^{L} \setminus \bigcup_{p,q} Z_{p,q}$.  The proof is complete by observing that whenever $\eta_{0,0}$ has full support, the family $\{\Modul^{(k,n)}\eta_{0,0}\}_{(k,n)\in\Z_L\times\Z_L}$ is linearly independent in $\C^{L^2}$.
\end{proof}
\begin{figure}[ht]
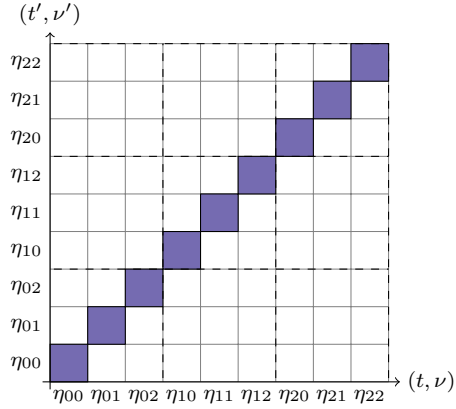

\centering
\def\myscale{0.5}
\goodcolor
\tikz[scale=\myscale] \mysympattern{3}{1,...,9}{};
\caption{By \autoref{thm:diagonal}, all diagonal patterns are permissible for almost all $c$.\label{fig:subfig3}}
\end{figure}
\begin{figure}[ht]
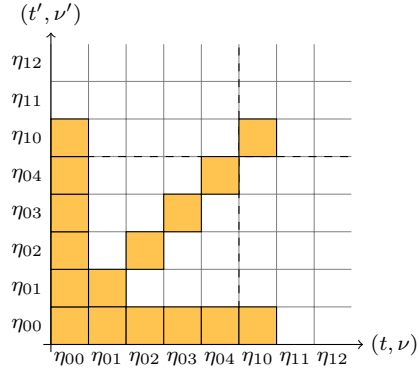

\centering
\badcolor  \tikz[scale=0.5] \mysympatternwithdots{5}{8}{1,...,6}{1/2,1/3,1/4,1/5,1/6}{7/{}/{},8/{\ldots}/{\vdots\;}};
\caption{\label{fig:arrowhead} Defective autocorrelation pattern on $L=5$. Total area $16/25$ leads to a $16 \times 25$ matrix of rank 13.}
\end{figure}
Another source of deficiency that begins to appear only with $L\geq 4$ is illustrated on \autoref{fig:arrowhead}.
\begin{proposition}\label{thm:tall}
Let the \emph{height} (or, equivalently, the width) of the pattern $\Lambda$ exceed $L$, or, formally, for some $\lambda_0\in\Lambda$,
\begin{equation*}\label{eq:lambda_o}
\#\{ \lambda' \colon (\lambda_0, \lambda') \in \Lambda \} > L.
\end{equation*}
Then $\Lambda$ is not the permissible pattern. The matrix $\GG\eval_\Lambda$ is singular for all $c\in \C^L$.
\end{proposition}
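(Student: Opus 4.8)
The plan is to produce, for every window $c\in\C^L$, an explicit nontrivial linear combination of columns of $\GG\eval_\Lambda$ that vanishes, thereby showing this matrix is never left invertible; by condition (iii) of \autoref{lem:equivalence} together with the definition of permissibility, this forces $\Lambda$ to be defective. The guiding idea is that the obstruction here is purely a pigeonhole dimension count inside $\C^L$, so it holds uniformly in $c$ and owes nothing to any Haar-type genericity of the underlying Gabor frame.

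First I would recall the column structure of $\GG = \conj{G}\otimes G$: writing $g_{(k,n)} \coloneqq \Modul^n\Trans^k c\in\C^L$ for the $(k,n)$-th Gabor frame vector, the column of $\GG$ indexed by a pair $(\lambda,\lambda')\in\Index\times\Index$ is the rank-one tensor $\conj{g_\lambda}\otimes g_{\lambda'}$, so that $\GG\eval_\Lambda$ has as columns exactly those tensors with $(\lambda,\lambda')\in\Lambda$. Next, fix the index $\lambda_0=(k_0,n_0)$ witnessing the height bound and consider its slice $C_{\lambda_0} \coloneqq \{\lambda'\colon (\lambda_0,\lambda')\in\Lambda\}$, which by hypothesis has $\abs{C_{\lambda_0}} > L$. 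All columns of $\GG\eval_\Lambda$ coming from this slice share the common left factor $\conj{g_{\lambda_0}}$; they are the tensors $\{\conj{g_{\lambda_0}}\otimes g_{\lambda'}\}_{\lambda'\in C_{\lambda_0}}$.

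The crux is then immediate: the family $\{g_{\lambda'}\}_{\lambda'\in C_{\lambda_0}}$ consists of more than $L$ vectors in the $L$-dimensional space $\C^L$, hence is linearly dependent for every $c$. Choosing a nontrivial relation $\sum_{\lambda'\in C_{\lambda_0}}\beta_{\lambda'} g_{\lambda'}=0$ and tensoring on the left by $\conj{g_{\lambda_0}}$ yields
\[
\sum_{\lambda'\in C_{\lambda_0}} \beta_{\lambda'}\,\bigl(\conj{g_{\lambda_0}}\otimes g_{\lambda'}\bigr) = \conj{g_{\lambda_0}}\otimes\Bigl(\sum_{\lambda'\in C_{\lambda_0}} \beta_{\lambda'}\, g_{\lambda'}\Bigr) = 0,
\]
a nontrivial dependence among distinct columns (the indices $(\lambda_0,\lambda')$ are pairwise distinct) of $\GG\eval_\Lambda$, so the matrix is rank-deficient for every $c\in\C^L$. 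The parenthetical \emph{width} statement follows either verbatim by extracting the common \emph{right} factor $g_{\lambda_0'}$ from columns sharing a fixed second index, or, since $\Lambda$ is a \spd pattern, directly from the symmetry \eqref{eq:acp}, which renders the height and width conditions equivalent. I expect no real obstacle beyond correctly matching the ``row'' index $\lambda_0$ of the pattern to the conjugated tensor factor of $\GG$; the only point worth emphasizing is that the coefficients $\beta_{\lambda'}$ may depend on $c$, but the mere \emph{existence} of a nonzero such relation is guaranteed for all $c$ by the dimension count, which is exactly what makes the conclusion hold for every choice of window.
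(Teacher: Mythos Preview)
Your proof is correct and follows essentially the same approach as the paper: both exploit the pigeonhole dimension count $\abs{C_{\lambda_0}}>L$ in $\C^L$ to obtain a linear dependence among the $g_{\lambda'}$, then tensor with the fixed factor to produce a nontrivial relation among columns of $\GG\eval_\Lambda$. The only cosmetic difference is that the paper packages the dependence as a matrix $A=[a\ 0]$ with $GA=0$, forms the Hermitian $A+A^*$, and invokes condition~(ii) of \autoref{lem:equivalence}, whereas you go straight to condition~(iii); these are equivalent by that lemma, so the arguments coincide.
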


\begin{proof}
Let $\Gamma = \{ \lambda' \colon (\lambda_0, \lambda') \in \Lambda \}$.
Then $\abs{\Gamma} > L$, therefore, the set $G\eval_\Gamma$ is linear dependent.
Hence, we can find a nontrivial vector $a\in \ker G\eval_\Gamma$.
We now have $(\GG)a= \tensoratom{G}{}{Ga}{} = 0$.
That is, we have found a nontrivial linear combination $\sum_{i\in\Gamma} a_i g_i = 0$, and
\[ \sum_{i \in \Gamma} \tensoratom{a_i g_i}{}{g}{1} = \sum_{i\in\Gamma} a_i \left(\tensoratom{g}{i}{g}{1}\right) = 0. \]

The theorem immediately follows from \autoref{lem:equivalence} by observing that whenever there exists a non-trivial $A$ supported on $\Lambda$, not skew-Hermitian and such that $GA=0$, (here, $A= \bigl[ a \  0 \bigr]$), then a non-trivial Hermitian matrix $A + A^*$ satisfies $G(A+A^*)G^* = 0$.
\end{proof}

The existence of defective \spd sets shows that one must be careful when trying to reconstruct the \acorr of the spreading function by sampling an operator with the delta train.
Luckily, the task of weeding out defective patterns is uncoupled from the sampling procedure.
All defective patterns $\Lambda^{\text{bad}}$ can be discovered numerically and inexpensively by testing a Gabor matrix generated by a random vector $c$.
It is easy to see that the rank deficiency of $\GG\eval_{\Lambda}$ will (with probability one) indicate whether $\Lambda$ is good or bad.

\begin{remark}
\autoref{thm:stoch_nontensor_rectified} gives a procedure to recover $\E{\steta\, \conj{\steta}}$ from the autocorrelation $\E{\f\conj{\f}}$ of the response $\f = \H\Shah_c$ of $\H$ to an $L$-periodic $c$-weighted delta train $\Shah_c= \sum_{k\in\Z} c_{k \bmod L} \, \delta_{a k}$.
The result is based on finding a permissible rectification of the set $\SetM$. If some rectification satisfies the symmetry conditions for autocorrelations, but is not permissible, one can seek alternative (possibly finer) rectifications with the hope that one of them is permissible.  However, Propositions~\ref{thm:two_squares} and~\ref{thm:tall} indicate that for some sets $\SetM$ there are no permissible rectifications, hence, \autoref{thm:stoch_nontensor_rectified} does not apply to $\StOPW(\SetM)$.

Indeed, if for some $(t_{0},\nu_{0})$, we have $\mu(\EXP\{\steta(t_{0},\nu_{0})\, \conj{\steta(t',\nu')\}})>1$, then every rectification will have the defect discussed in \autoref{thm:tall}, and our procedure to recover $\Reta(\tntn)$ is not applicable.

Similarly, if for two sets $S_{1},S_{2}$ we have $\mu(S_{1})+\mu(S_{2})>1$ and $(S_{1}\times S_{1})\cup (S_{2}\times S_{2})\subseteq \SetM$.
Consider $\SetM$ be $(a,b,\Lambda)$-symmetrically rectified, and $S'_1$ be the induced $(a,b,\Gamma_1)$\Hyphdash* rectification of $S_1$, and $S'_2$ be the induced $(a,b,\Gamma_2)$\Hyphdash* rectification of $S_2\setminus S'_1$.
(Note that $\Gamma_2$ is not empty unless $S_2 \subset S_1'$, but then $\mu(S'_1)>1$, and the rectification is again defective by \autoref{thm:tall}).
Clearly, $\Gamma_1 \cap \Gamma_2 = \emptyset$, $(\Gamma_1 \times \Gamma_1) \cup (\Gamma_2 \times \Gamma_2) \subset \Lambda$, and $\mu(S'_1)+\mu(S'_2) \geq \mu(S_1) + \mu(S_2) > 1,$
and \autoref{thm:two_squares} applies.

This observation does not imply that there exist symmetric $\SetM$ with 4D volume less than one and the property that no test signal $g$ allows to recover $\E{\steta\, \conj{\steta}}$ from the autocorrelation $\E{\H g\,\conj{\H g}}$.  In fact, recovery may be possible using a different type of a pilot signal, for example, a non-periodic delta train.
\end{remark}

\subsection{Equivalence classes of permissible patterns}  
\label{ssec:rank_preserve}
Let $\Sym$ be the \emph{symmetry group} of all permutations on the group $\Index$ of order $L^2$ defined in \eqref{eq:Index}.
\begin{definition}\label{defn:equiv}
We call two \spd patterns $\Lambda$ and $\Lambda^\perm$ on the grid $\Index\times\Index$ \emph{homologous} whenever there exists a permutation $\perm\in \Sym$ such that
\[  (\lambda, \lambda') \in \Lambda \quad \Leftrightarrow \quad  (\perm(\lambda), \perm(\lambda'))\in \Lambda^\perm.\]
Two homologous patterns $\Lambda$ and $\Lambda^\perm$ are \emph{equivalent} if for almost all choices of the window $c\in \C^L$ generating $G$,
\[
\rank \GG\eval_\Lambda = \rank\GG\eval_{\Lambda^\perm}. \]
\end{definition}
For example, the patterns on \autoref{fig:L=3_other} are all homologous, with the first two provably equivalent, according to \autoref{thm:equivalent} below.
\begin{figure}[t]
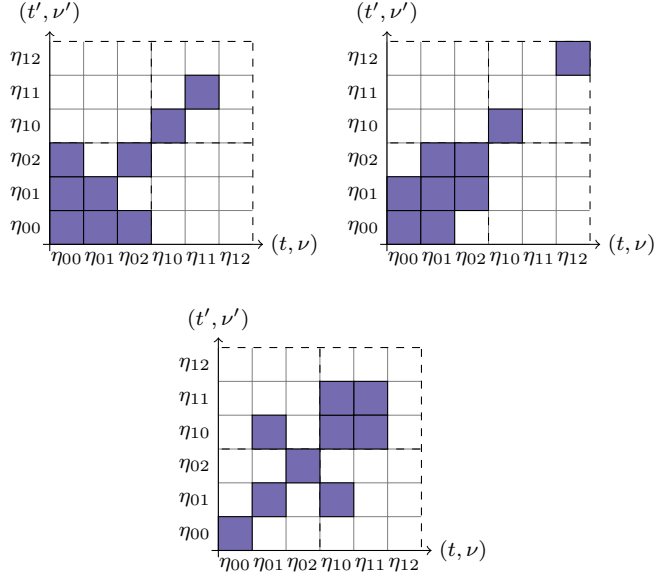

\centering
\def\myscale{0.45}
\goodcolor
\subfloat
{\label{fig:3fish1} \tikz[scale=\myscale] \mysympatternwithdots{3}{6}{1,...,5}{1/2,1/3}{4/{}/{},5/{}/{},6/{\ldots}/{\vdots\;}}; }
\subfloat
{\label{fig:3fish2} \tikz[scale=\myscale] \mysympatternwithdots{3}{6}{1,...,4,6}{1/2,2/3}{4/{}/{},5/{\eta_5}/{\eta_5},6/{}/{}}; }\\
\subfloat
{\label{fig:3fish3} \tikz[scale=\myscale] \mysympatternwithdots{3}{6}{1,...,5}{2/4,4/5}{4/{}/{},5/{}/{},6/{\ldots}/{\vdots\;}}; }
\caption{Three homologous patterns for $L=3$.\label{fig:L=3_other}}
\end{figure}

Numerical evidence shows that linear independence of the subsets of the tensored Gabor frame $\GG$ is invariant under homology in the sense of \autoref{defn:equiv}, for example, all the patterns on \autoref{fig:L=3_other} are permissible, as tested on a large number of randomly chosen $c$. Also, for $L=3,5$, and $7$, entire orbits of patterns $\{\Lambda^\perm\}_{\perm \in \Sym}$ are permissible. 
Additional indirect support to the hypothesis that any two patterns that are homologous with respect to the permutation of the Gabor atoms have the same linear independence status for almost all $c$ comes from the equivalence of  all patterns for the case $L=2$ (see \autoref{fig:L=2}) and the universality of  Theorems~\ref{thm:two_squares},~\ref{thm:diagonal} and~\ref{thm:tall} with respect to the permutations of the Gabor atoms.

We will describe the permutations $\perm \in \Sym$ such that $\Lambda$ and $\Lambda^\perm$ are provably equivalent for any $\Lambda$ for almost all $c\in \C^L$.
Clearly, all such permutations form a subgroup of $\Sym$. It remains an open question whether this subgroup is the whole group~$\Sym$.

For sake of notational simplicity, let us identify the atoms of a Gabor frame $G$ generated by a window $c\in \C^L$ with the elements of the torus $\Z_L\times \Z_L$. The $(k,n)$th atom $g = \MTc = \pi(k,n) \, c$ would correspond to $(k,n) \in \Z_L\times \Z_L$, where finite-dimensional operators $\Modul$ and $\Trans$ are defined in \eqref{eq:define.fin.dim.TM}, and $\pi(k,n) = \Modul^n \Trans^k$.
\begin{proposition}\label{thm:equivalent}
Let $A$ denote the group of permutations $\perm \in \Sym$ such that $\Lambda$ and $\Lambda^\perm$ are equivalent for all \spd patterns $\Lambda \subseteq \Index$ for all $c\in\C^L$, except maybe for a set of measure zero.
\begin{enumerate}[a)]
\item \label{item:trans} translations of the torus $\perm_{a}(k,n) \coloneqq (k+q, n+p) \bmod L$ belong to $A$ for all $p, q \in \Z$.
\item reflection of the torus $\perm_{b}(k,n) \coloneqq (-n, k) \bmod{L}$ belongs to $A$.
\item rotation of the torus $\perm_{c}(k,n) \coloneqq (k,-n)\bmod L$ belongs to $A$.
\end{enumerate}
\end{proposition}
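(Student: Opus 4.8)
The plan is to realize each of the three torus permutations as a genuine symmetry of the underlying Gabor system. Concretely, for a permutation $\perm$ I would look for an invertible real-linear isometry $W$ of $\C^L$ that \emph{intertwines} the time-frequency shifts up to a unimodular phase, $W\,\pi(k,n)\,W^{-1} = \omega_\perm(k,n)\,\pi(\perm(k,n))$. Writing $g_{k,n}=\pi(k,n)c$ and $r_\Lambda(c)$ for $\rank\GG\eval_\Lambda$ built from the window $c$, such a $W$ gives $Wg_{k,n}=\omega_\perm(k,n)\,\pi(\perm(k,n))(Wc)$: the $(k,n)$th atom of the frame with window $c$ is carried, up to a phase, onto the atom of the frame with window $Wc$ sitting at the permuted index. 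Lifting by $\conj W\otimes W$ then sends each column $\conj{g_{k,n}}\otimes g_{k',n'}$ of $\GG\eval_\Lambda$ (window $c$) to a unimodular multiple of a column of $\GG\eval_{\Lambda^\perm}$ (window $Wc$). Since $\conj W\otimes W$ is invertible and the phases merely rescale columns, this identifies the two matrices up to invertible row/column operations and yields
\[ r_{\Lambda^\perm}(Wc) = r_\Lambda(c) \quad\text{for every } c. \]

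For the three cases I would exhibit $W$ explicitly. For translations $\perm_a(k,n)=(k+q,n+p)$ take $W=\pi(q,p)$: the commutation relation $\Trans^k\Modul^n=\empi{kn}\MT$ gives $\pi(q,p)\pi(k,n)=\omega\,\pi(k+q,n+p)$, and here $Wg_{k,n}$ lands in the \emph{same} frame (window $c$), so the identity holds verbatim for every $c$ with no exceptional set. For the reflection $\perm_b(k,n)=(-n,k)$ take $W=\FT$, the finite Fourier transform on $\C^L$: from $\FT\Trans^k\FT^{-1}=\Modul^{-k}$ and $\FT\Modul^n\FT^{-1}=\Trans^{n}$ one gets that $\FT$ conjugates $\pi(k,n)$ to a phase times the $90^\circ$-rotated shift, realizing $\perm_b$ up to inversion (immaterial, as $A$ is a group). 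For the rotation $\perm_c(k,n)=(k,-n)$ take $W$ to be complex conjugation $C\colon c\mapsto\conj c$, which fixes $\Trans^k$ and sends $\Modul^n$ to $\Modul^{-n}$. The one wrinkle is that $C$ is antilinear, so it flips the conjugation pattern of the two tensor factors; I would absorb this by composing with the coordinate swap $\Sigma\colon u\otimes v\mapsto v\otimes u$ on $\C^L\otimes\C^L$, which is legitimate precisely because every \spd pattern is symmetric, so $\Sigma$ permutes the columns of $\GG\eval_{\Lambda^{\perm_c}}$ among themselves.

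It then remains to upgrade the window-changed identity $r_{\Lambda^\perm}(Wc)=r_\Lambda(c)$ to the equal-rank statement for the \emph{same} $c$ demanded by \autoref{defn:equiv}, and this is where the almost-everywhere qualifier enters. The entries of $\GG\eval_\Lambda$ are the products $\conj{g_{k,n}[s]}\,g_{k',n'}[t]$, each a polynomial in the real and imaginary parts of $c$; hence $c\mapsto r_\Lambda(c)$ equals its maximal value $r_\Lambda^{\max}$ off the common zero set of all $r_\Lambda^{\max}\times r_\Lambda^{\max}$ minors, a proper algebraic and therefore Lebesgue-null subset, and likewise for $\Lambda^\perm$. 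Since each $W$ above (including $C$) is an invertible real-linear isometry of $\C^L\cong\R^{2L}$, it preserves null sets; substituting $c\mapsto W^{-1}c$ shows $r_{\Lambda^\perm}^{\max}=r_\Lambda^{\max}$ and that both functions attain this common maximum on a set of full measure. Hence $r_\Lambda(c)=r_{\Lambda^\perm}(c)$ for almost every $c$, which is exactly equivalence; closure of $A$ under inverses then places $\perm_a,\perm_b,\perm_c$ in $A$.

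I expect the main obstacle to be this last bridge rather than the intertwining computations: one must argue that the rank is generically constant, that the window transforms preserve measure, and handle the antilinearity of $C$ together with the swap $\Sigma$ cleanly, since only then does the per-$c$ identity with a shifted window upgrade to the same-window almost-everywhere statement required by the definition of equivalence.
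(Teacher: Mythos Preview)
Your proposal is correct and follows essentially the same approach as the paper: realize each permutation via a time--frequency shift of the window, the discrete Fourier transform, and complex conjugation, respectively, and then pass from the identity $r_{\Lambda^\perm}(Wc)=r_\Lambda(c)$ to the almost-everywhere same-window statement by a measure-theoretic argument. Your polynomial/algebraic-set justification for generic constancy of the rank is more explicit than the paper's, which simply invokes bijectivity of $c\mapsto Wc$; and your intertwining-operator framing is slightly more uniform, even giving the stronger ``for every $c$'' conclusion in case~(a) without an exceptional set. The one place where you add machinery the paper avoids is case~(c): the paper just conjugates the whole matrix and notes that conjugation preserves rank, which already sends the column $\conj{g_{k,n}}\otimes g_{k',n'}$ (window $c$) to $\conj{h_{k,-n}}\otimes h_{k',-n'}$ (window $\conj c$), so the swap $\Sigma$ is not actually needed.
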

For an illustration of these transformations in the case of $\Z_4\times \Z_4$, see Figures~\ref{fig:torus3ops} and~\ref{fig:dft}.
\begin{proof}
Note that the linear independence of the subset $G_\Lambda$ of elements of $G$ is invariant under scaling the atoms by scalars $p_\lambda$, complex conjugation, and any invertible linear transformation $T$. Let $G = [\MTc]_{k,n=0}^{L-1}$. First, we construct several Gabor frames closely related to $G$.

\begin{enumerate}[a)]
\item It is easy to see that the Gabor frame $G_a$ constructed from the atom $c_a = \phi_a(c) \coloneqq \Modul^p \Trans^q \, c$ is a column permutation of $G$ up to constant multiples of columns by $C = \epi{p k}$. Indeed,
\begin{align*} \pi(k,n)\, c_a & = \Modul^n \Trans^k (\Modul^p \Trans^q c) \\
&= \Modul^n \epi{p k} \Modul^p \Trans^k \Trans^q c \\
&= \epi{p k} \Modul^{n+p}\Trans^{k+q} c \\
&= C_a \pi(k+q,n+p) c.
\end{align*}
\begin{figure}[H]
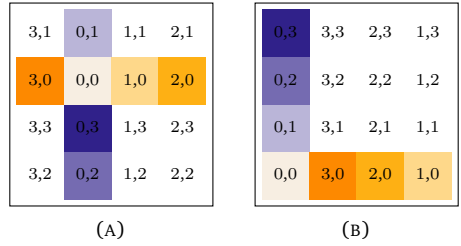

\centering
\subfloat[]{\label{fig:torustrans} \tikz[scale=0.9]	\torusdraw{mod(\x+4-1,4)}{mod(\y+4-2,4)}{3}; }  \quad
\subfloat[] {\label{fig:torusrefl} \tikz[scale=0.9]	 \torusdraw{mod(4-\x,4)}{\y}{3}; }
\caption{Actions of the time-frequency shift (a) and conjugation (b) are translation and horizontal reflection, respectively.\label{fig:torus3ops} }
\end{figure}
\item Similarly, constructing the Gabor frame $G_b$ with the Gabor window being the Fourier transform of the atom $c_b = \phi_b(c) \coloneqq \FT c$ produces a Fourier image of the original Gabor matrix. This holds since the Fourier transform commutes with time-frequency shifts in the following sense:
\begin{align*}\label{eq:FT}
\pi(k,n)c_b &= \Modul^n \Trans^k  \FT c \\
&= \FT(\Trans^{-n} \Modul^k c) \\
&=\FT(\epi{k n} \Modul^k \Trans^{-n} c) \\
&= C_b \, \FT \pi(-n,k) c.
\end{align*}
The Fourier transform is a unitary operation which acts on the Gabor matrix from the left, hence it does not change the linear independence of the columns, scaling of columns by the constant  $C_b$ notwithstanding.
In other words, $G_b$ is a frame that is \emph{unitarily equivalent} to a particular permutation of $G$, namely, it corresponds to rotating the torus, see \autoref{fig:dft}~\cite{HanLarson, ValeWaldron}.

Curiously, the vertical flip operation $\mathcal{I}c_j \coloneqq c_{-j}$ also creates an equivalent pattern. This can be understood by observing $\mathcal{I} = \FT^2$.
This corresponds to the reflection of both rows and columns of the torus $\Z_L \times \Z_L$.
\begin{figure}[h]
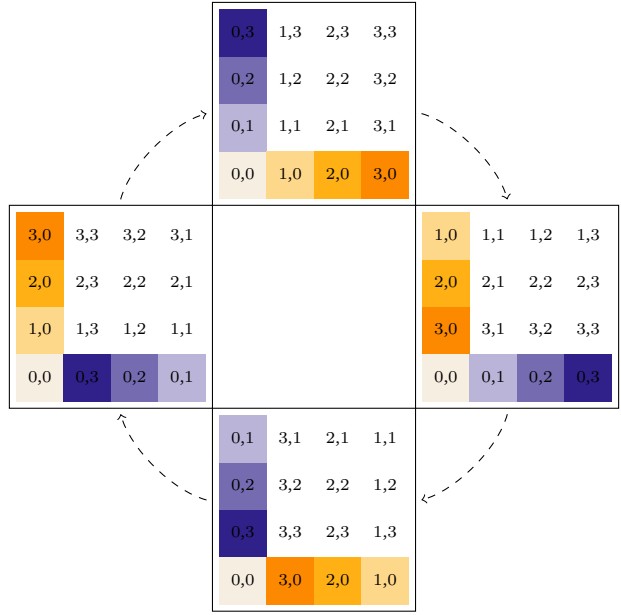

\centering
\tikz[scale=0.9]\toruscombine;
\caption{Action of the Fourier transform is $90^\circ$ rotation.\label{fig:dft}}
\end{figure}
\item Another operation that preserves linear independence is conjugation. Let the Gabor frame $G_c$ be generated by the window $ c_c = \phi_c(c) \coloneqq \conj{c}$. Then $\conj{G_c}$ contains all the atoms of $G$ in the order that corresponds to the horizontal reflection of the torus group on itself. Note that taking conjugation does not destroy the linear independence of the vectors.
\end{enumerate}
Now let $\Lambda$ be any \spd pattern. From the above constructions of the Gabor frames $G_\alpha, \alpha=a,b,c$, it is evident that for all generating atoms $c_\alpha \coloneqq \phi_\alpha (c) \in \C^L$, $\rank \tensoratom{G}{\alpha }{G}{\alpha }\eval_{\Lambda^{\perm_\alpha }} = \rank \GG\eval_\Lambda$.
 For any $\alpha=a,b,c$ the transformation $\phi_\alpha\colon c\mapsto c_\alpha$ is a bijective mapping $C^L\to \C^L$, therefore,
if $\Lambda$ is permissible for some $c$, then $\Lambda^\perm$ is also permissible (for some other $c$), and moreover, if $\Lambda$ is permissible for all $c \in \C^L$ except a zero set $Z$, then $\Lambda^\perm$ is also permissible for all $c \in \C^L$ except a possibly different zero set $Z^\perm$. And conversely, if $\Lambda$ is defective, then $\Lambda^\perm$ is also defective.
Since there are finitely many permutations $\perm$, the union of all bad sets $\bigcup_{\perm \in \Sym} Z^\perm$ has measure zero, and it follows that $\Lambda$ and $\Lambda^\perm$ are equivalent.
\end{proof}

 \section{Conclusion}\label{sec:conclusions}We show that a large class of stochastic operators can be reconstructed from the stochastic process resulting from the application of the operator to an appropriately designed deterministic test signal. Our results are based on combining recently developed sampling results for deterministic band limited operators \cite{KozPfa,PfaWal,Pfander} and the sampling theory of stochastic processes \cite{Lee,Papoulis}. 
The classes of stochastic operators considered here are characterized by support constraints on the autocorrelation of the stochastic spreading function of the operator. Our results do not necessitate the frequently used WSSUS condition, nor the so-called underspread condition on operators. 
Moreover, we show that in some cases when the operator cannot be fully determined from its action on a test signal, still, the second order statistics of the spreading function, that is, its autocorrelation can be determined from the operator output.
While in the deterministic case the possibility of operator reconstruction only depends on the support size of the spreading function, we show that in the stochastic case geometric considerations play a fundamental role for reconstruction.

\bibliographystyle{plainnat}
\bibliography{PfaZh02}
\end{document}